\def\baa{\begin{align}}
\def\eaa{\end{align}}
\newcommand{\bsq}{\begin{subequations}}
\newcommand{\esq}{\end{subequations}}
\newcommand{\beq}{\begin{equation}}
\newcommand{\eeq}{\end{equation}}
\newcommand{\bq}{\begin{eqnarray}}
\newcommand{\eq}{\end{eqnarray}}
\newcommand{\bqn}{\begin{eqnarray*}}
	\newcommand{\eqn}{\end{eqnarray*}}
\newcommand{\bee}{\begin{enumerate}}
	\newcommand{\eee}{\end{enumerate}}
\newcommand{\bi}{\begin{itemize}}
	\newcommand{\ei}{\end{itemize}}
\newcommand{\wang}[1]{\ifthenelse{\boolean{showcomments}}
	{ \textcolor[rgb]{1,0,1}{(ZW:  #1)}}{}}
\newcommand{\fliu}[1]{\ifthenelse{\boolean{showcomments}}
	{ \textcolor{red}{(FL:  #1)}}{}}
\newcommand{\zhao}[1]{\ifthenelse{\boolean{showcomments}}
	{ \textcolor{green}{(JP:  #1)}}{}}
\newcommand{\slow}[1]{\ifthenelse{\boolean{showcomments}}
	{ \textcolor{blue}{(SL:  #1)}}{}}
\theoremstyle{definition}
\newtheorem{theorem}{Theorem}
\newtheorem{lemma}[theorem]{Lemma}
\theoremstyle{definition}
\newtheorem{definition}{Definition}
\newtheorem{remark}{Remark}
\newtheorem{assumption}{\textit{Assumption}}
\let\MYoriglatexcaption\caption
\renewcommand{\caption}[2][\relax]{\MYoriglatexcaption[#2]{#2}}
\begin{document}
\setstretch{0.99}	
	\title{Asynchronous Distributed Power Control of Multi-Microgrid Systems Based on \\the Operator Splitting Approach}
	
	\author{Zhaojian~Wang, 
		Shengwei~Mei,~\IEEEmembership{Fellow,~IEEE}, 
		Feng~Liu,~\IEEEmembership{Senior Member,~IEEE},\\
		Peng Yi, Ming Cao,~\IEEEmembership{Senior Member,~IEEE} 
	\thanks{This work was supported  by the National Natural Science Foundation
		of China ( No. 51677100, U1766206, No. 51621065 ). (\textit{Corresponding author: Feng Liu})    }       	
	\thanks{Z. Wang, F. Liu and S. Mei  are with the State Key Laboratory of Power System and Department
		of Electrical Engineering, Tsinghua University, Beijing, 100084,
		China (e-mail: lfeng@tsinghua.edu.cn).}
	\thanks{P. Yi is with the Department of Electrical and Systems Engineering, Washington University in St. Louis, 1 Brookings Drive, St. Louis, MO, 63130, USA.}
	\thanks{M. Cao is with the Faculty of Science and Engineering, University of Groningen, Groningen 9747 AG, The Netherlands.}
}


	\maketitle
	
	\begin{abstract}                                                
		Forming (hybrid) AC/DC microgrids (MGs) has become a promising manner for  the interconnection of various kinds of distributed generators that are inherently AC or DC electric sources. This paper addresses the distributed asynchronous power control problem of hybrid microgrids, considering imperfect communication due to non-identical sampling rates and communication delays. To this end,  we first formulate the optimal power control  problem of MGs and devise a synchronous algorithm. Then, we analyze the impact of asynchrony on optimal power control and propose an asynchronous iteration algorithm based on the synchronous version. By introducing a random clock at each iteration, different types of asynchrony are fitted into a unified framework, where the asynchronous algorithm is converted into a fixed-point  problem based on the operator splitting method, leading to a convergence proof. We further provide an upper bound estimation of the time delay in the communication. Moreover, the real-time implementation of the proposed algorithm in both  AC and DC MGs is introduced. By taking the power system as a solver, the controller is simplified by reducing one order and the power loss can be considered. Finally, a benchmark MG is utilized to verify the effectiveness and advantages of the proposed algorithm.
	\end{abstract}
	
	
	{\setlength{\parskip}{0.5\baselineskip}
	\begin{IEEEkeywords}
		Asynchronous control, distributed power control; hybrid AC/DC microgrids, time delay.
	\end{IEEEkeywords}}

	\section{Introduction}
	Multi-Microgrid systems or Microgrids (MGs) are clusters of distributed generators (DGs), energy storage systems and loads, which are generally categorized into three types: AC, DC and hybrid AC/DC MGs \cite{xu2017decentralized,xu2018prescribed,weng2018distributed}. A hybrid AC/DC MG has the great advantage of reducing processes of multiple inverse conversions in the involved individual AC or DC grid \cite{xia2018power}. In this paper, we address the distributed power control problem of hybrid AC/DC MGs considering asynchrony.
	
	Traditionally, a hierarchical control structure is utilized in MGs for power control, which is composed of primary control, secondary control and tertiary control \cite{guerrero2011hierarchical}, usually in a centralized way. Such a centralized control architecture, however, may face great challenges raised by ever-increasing uncertain and volatile renewable generations that require  fast response of controllers \cite{wang2018distributed}. On the other hand, as MGs usually belong to different owners, privacy concerns may prevent the control center acquiring  information from individual MGs. In this context, breaking the hierarchy of MG control architecture becomes an emerging research topic, supported by the new idea that real-time coordination could be embedded in the local steady-state optimization of individual MGs by exchanging information only between neighboring MGs. This essentially advocates a \emph{distributed} control paradigm \cite{dorfler2016breaking,nguyen2017distributed,li2016event,li2017distributed}. 
		
	Different distributed strategies have been developed in literatures for optimal real-time coordination, which can roughly be divided	into two categories in terms of methodology: consensus based methods \cite{ahn2018distributed, Simpson2015Secondary, Dehkordi2018distributed, zhou2017event,Chen2015Distributed, wu2017distributed, zhao2018analysis}, and (sub)gradient based decomposition methods \cite{zhang2015real,wu2017distributed2,   wang2017unified,zholbaryssov2019resilient}. In the consensus based control, the agents manage to estimate the global variable using a consensus algorithm \cite{meng2014studies,li2017distributed}. Specifically, in power systems, the global variable could be the generation ratio and the marginal cost. The former implies that all generators have the same generation ratio with respect to its maximal capability \cite{Simpson2015Secondary, Dehkordi2018distributed, zhou2017event}. The latter implies that all generators share the same marginal cost, and hence the generation configuration is economically optimal \cite{Chen2015Distributed, wu2017distributed, zhao2018analysis}. Even though the consensus methods are easy to be implemented, they are difficult to address complicated (global) constraints. In this situation, (sub)gradient based decomposition methods could be applied, where the  optimization problem is solved by dual (sub)gradient ascent \cite{zhang2015real,yuan2016regularized,wu2017distributed2,   wang2017unified,zholbaryssov2019resilient}.
	
	Although the aforementioned methods have achieved great success, there are still two issues to be hurdled before a practical implementation. First, most of them have considered only {synchronous} distributed control. Therefore,  all MGs must carry out computation simultaneously, implying that a global clock is necessary to ensure the instants for control actions getting strictly synchronized. This is  computationally inefficient and impractical since in each iteration all MGs have to wait for the slowest one to finish before executing their local actions in the next iteration. In fact, asynchrony widely exists in power systems, such as time delays and non-identical sampling rates \cite{alkano2017asynchronous,yang2017distributed}.	
	Hence, the synchrony requirement limits the application of distributed control.
	Second, the load demand in existing literature is usually assumed to be known, especially in those papers addressing optimal economic dispatch problem \cite{wu2017distributed2,zhao2018analysis,zholbaryssov2019resilient}. However, the load demand is very difficult to measure and is always time varying when demand response and electric vehicles exist. In addition, the fast varying situation requires the real-time implementation of the algorithm. This somewhat makes the existing method difficult to be applied. Thus, it is significant to investigate the controller that adapts to {asynchronous} and real-time implementation.
	
	The purpose of this paper is to propose an asynchronous algorithm for the optimal power control of hybrid AC/DC MGs. In addition, we will also introduce the real-time implementation of the algorithm, where the power system is taken as a solver to compute some variables automatically. In this way, the algorithm can be greatly simplified and the power loss can be considered. Main contributions of this paper are as follows. 
	\begin{itemize}
		\item We devise an asynchronous distributed algorithm  to solve the optimal power control problem of hybrid AC/DC MGs. Different from most existing works, in this paper, different kinds of asynchrony are fitted into one unified form  i.e., the time interval between two consecutive iterations,  by introducing a virtual  random clock; 
		\item We give a  convergence proof of the distributed algorithm by  converting it into a fixed-point iteration using the operator splitting approach. The upper bound of communication delay that guarantees the convergence is given, which is approximately proportional to the square root of the number of MGs;
		\item We provide a real-time implementation of the proposed algorithm, where the power system is taken as a solver to compute some variables automatically. This simplifies the algorithm by reducing one oder. Moreover, we provide methods to estimate the unknown load demand in AC and DC MGs with physical system variables respectively.  In this way, the impact of power loss such as line and inverter loss can  be considered.
	\end{itemize}

The rest of this paper is organized as follows. In Section II, some notations and preliminaries are introduced. Section III formulates the power dispatch problem in hybrid MGs and proposes the synchronous algorithm. In Section IV, different types of asynchrony are introduced and an asynchronous algorithm is proposed. The optimality of its equilibrium point and convergence of the asynchronous algorithm are proved in Section V. The implementation method in hybrid MGs is introduced in Section VI. We confirm the performance of the controller via simulations on a benchmark low voltage MG system in Section VII. Section VIII concludes the paper.

\section{Notations and Preliminaries}

\textit{Notations}: A hybrid MG system is composed of a cluster of AC and DC MGs connected by lines. Each MG is treated as a bus with both generation and load. Denote AC MGs by $\mathcal {N}_{ac}=\{1, 2, \dots, n_{ac}\}$, and DC MGs by $\mathcal {N}_{dc}=\{n_{ac}+1, n_{ac}+2, \dots, n_{ac}+n_{dc}\}$. Then the set of MG buses is $\mathcal{N}=\mathcal {N}_{ac} \cup \mathcal {N}_{dc}$.
Let $\mathcal E\subseteq \mathcal N\times \mathcal N$ be the set of lines, where $(i,k)\in \mathcal E$ if MGs $i$ and $k$ are connected directly. Then the overall system is modeled as a connected graph $\mathcal{G}:=(\mathcal N, \mathcal E)$. Besides the physical connection among MGs, we also define a communication graph for MGs. Denote by $N_i$ the set of informational neighbors of MG $i$ over the communication graph, implying MGs $i, j$ can communicate if and only if $j\in N_i$. Denote by $N_i^2$ the set of two-hop neighbors of MG $i$ over the communication graph. The cardinality of $N_i$ is denoted by $|N_i|$. The communication graph is also assumed to be undirected and connected, which could be different from the physical graph. Denote by $L$ the Laplacian matrix of communication graph. 

\textit{Preliminaries}: In this paper, $\mathbb{R}^n$ ($\mathbb{R}^n_{+}$) is the $n$-dimensional (nonnegative) Euclidean space. For a column vector $x\in \mathbb{R}^n$ (matrix $A\in \mathbb{R}^{m\times n}$), $x^{\mathrm{T}}$($A^{\mathrm{T}}$) denotes its transpose. For vectors $x,y\in \mathbb{R}^n$, $x^{\mathrm{T}}y=\left\langle x,y \right\rangle$ denotes the inner product of $x,y$. 
$\left\|x \right\|=\sqrt{x^{\mathrm{T}}x}$ denotes the Euclidean norm of $x$. 
For a positive definite matrix $G$, denote the inner product $\left\langle x,y \right\rangle_G=\left\langle Gx,y \right\rangle$. Similarly, the $G$-matrix induced norm $\left\|x \right\|_G=\sqrt{\left\langle Gx,x \right\rangle}$. 
Use $I$ to denote the identity matrix with proper dimensions. For a matrix $A=[a_{ij}]$, $a_{ij}$ stands for the entry in the $i$-th row and $j$-th column of $A$. Use $\prod_{i=1}^n\Omega_i$ to denote the Cartesian product of the sets $\Omega_i, i=1, \cdots, n$. 
Given a collection of $y_i$ for $i$ in a certain set $Y$, $y$ denotes the column vector
$y := (y_i, i\in Y)$ with a proper dimension with $y_i$ as its components.

Define the projection of $x$ onto a set $\Omega$ as 
\begin{align}
\label{def:projection}
	\mathcal{P}_{\Omega}(x)=\arg \min_{y\in \Omega}\left\|x-y \right\|
\end{align}
Use ${\rm{Id}}$ to denote the identity operator, i.e., ${\rm{Id}}(x)=x$, $\forall x$. Define $N_\Omega(x)=\{v|\left\langle v, y-x\right\rangle\le 0, \forall y\in \Omega\}$. We have $\mathcal{P}_{\Omega}(x)=({\rm{Id}}+N_{\Omega})^{-1}(x)$ \cite{yi2017distributed}, \cite[Chapter 23.1]{bauschke2011convex}.

%


For a single-valued operator $\mathcal{T}:\Omega\subset\mathbb{R}^n\rightarrow\mathbb{R}^n$, a point $x\in \Omega$ is a fixed point of $\mathcal{T}$ if $\mathcal{T}(x)\equiv x$. The set of fixed points of $\mathcal{T}$ is denoted by ${Fix}(\mathcal{T})$. $\mathcal{T}$ is nonexpansive if $\left\|\mathcal{T}(x)- \mathcal{T}(y)\right\|\le\left\|x- y\right\|, \forall x, y \in \Omega$. For $\alpha\in (0,1)$, $\mathcal{T}$ is called $\alpha$-averaged if there exists a nonexpansive operator $\mathcal{R}$ such that $\mathcal{T}=(1-\alpha){\rm{Id}}+\alpha \mathcal{R}$. We use $\mathcal{A}(\alpha)$ to denote the class of $\alpha$-averaged operators. For $\beta\in\mathbb{R}^1_{+}$, $\mathcal{T}$ is called $\beta$-cocoercive if $\beta\mathcal{T}\in \mathcal{A}(\frac{1}{2})$.

\section{Synchronous Distributed Algorithm}

In this section, we introduce the economic dispatch problem in MGs and propose a synchronous algorithm.
\subsection{Economic dispatch model}
The power dispatch is to achieve the power balance in MGs while minimizing the generation cost, which can be formulated as the following optimization problem
\begin{subequations}
	\label{eq:opt.general1}     
	\begin{align}
	\min\limits_{P_i^g}\quad&  \sum\limits_{i\in \mathcal{N}} f_i(P_i^g)
	\label{eq:opt.general1d}
	\\ 
	\text{s.t.} \quad
	&\sum\limits_{i\in \mathcal{N}} P_i^g = \sum\limits_{i\in \mathcal{N}} P_i^d
	\label{eq:opt.general1a}
	\\
	\label{eq:opt.general1b}
	&\underline P_i^g\le P_i^g\le \overline P_i^g
	\end{align}
\end{subequations}
where $f_i(P_i^g)=\frac{1}{2}a_i(P_i^g)^2 + b_iP_i^g$, with $a_i>0, b_i>0$. $P_i^g$, $P_i^d$ are the power generation and load demand of MG $i$ respectively. $\underline P_i^g, \overline P_i^g$ are the lower and upper bounds of $P_i^g$ respectively. The objective function \eqref{eq:opt.general1d} is to minimize the total generation cost of the MGs. Constraint \eqref{eq:opt.general1a} is the power balance over MGs. And \eqref{eq:opt.general1b} is the generation limit of each MG. 

For the optimization problem \eqref{eq:opt.general1}, we make the following assumption.
\begin{assumption}
	\label{Slater}
	The Slater's condition  \cite[Chapter 5.2.3]{boyd2004convex} of \eqref{eq:opt.general1} holds, i.e., problem \eqref{eq:opt.general1} is feasible due to affine constraints.
\end{assumption}

\subsection{Synchronous Algorithm}
\label{subsec:optimality.1}
The Lagrangian function of \eqref{eq:opt.general1} is 
\begin{align*}
\mathcal{L}&= \sum\limits_{i\in \mathcal{N}} f_i(P_i^g) +\mu\left(\sum\limits_{i\in \mathcal{N}} P_i^g - \sum\limits_{i\in \mathcal{N}} P_i^d\right) \\
&\quad + \sum\limits_{i\in \mathcal{N}}\gamma_i^{-} (\underline P_i^g- P_i^g) + \sum\limits_{i\in \mathcal{N}}\gamma_i^{+} (P_i^g- \overline P_i^g)
\end{align*}
where $\mu, \gamma_i^{-}, \gamma_i^{+}$ are Lagrangian multipliers. Here $\mu$ is a global variable, but will be estimated by individual MGs locally.

Define the sets 
\begin{align}
\Omega_i:=\left\{P_i^g\ |\ \underline P_i^g\le P_i^g\le \overline P_i^g\right\},\ \Omega=\prod\nolimits_{i=1}^N\Omega_i
\end{align}

Then, we give the \underline{s}ynchronous \underline{d}istributed algorithm for \underline{p}ower \underline{d}ispatch (SDPD). 
In this case, the update of agent $i$ at iteration $k$ is given as, which takes the form of Krasnosel'ski{\v{i}}-Mann iteration \cite[Chapter 5.2]{bauschke2011convex}.
\begin{subequations}
	\label{SYN}
	\begin{align}
	\label{SYN1}
	&\tilde \mu_{i,k}= \mu_{i,k}+\sigma_{\mu}\bigg( -\sum\limits_{j\in N_i}\left(\mu_{i,k}-\mu_{j,k}\right) \nonumber\\
	& \qquad\qquad \qquad+ \sum\limits_{j\in N_i}(z_{i,k}-z_{j,k}) +P^g_{i,k}- P^d_i \bigg) \\
	\label{SYN2}
	&\tilde z_{i_k,k}= z_{i_k,k}-\sigma_{z} \bigg(2\sum\limits_{j\in N_i}\left(\tilde\mu_{i,k}-\tilde\mu_{j,k}\right) \nonumber \\
	&\qquad\qquad\qquad\qquad\qquad  -\sum\limits_{j\in N_i}\left(\mu_{i,k}-\mu_{j,k}\right) \bigg)\\
	\label{SYN3}
	&\tilde P^g_{i,k}=\mathcal{P}_{\Omega_i}\left(P^g_{i,k} - \sigma_{g}\big(f_i^{'}(P^g_{i,k})+2\tilde \mu_{i,k} -\mu_{i,k}\big)\right) \\
	\label{SYN4}
	& \mu_{i,k+1}= \mu_{i,k} + \eta_k \left(\tilde \mu_{i,k}-\mu_{i,k}\right)\\
	\label{SYN5}
	& z_{i,k+1}= z_{i,k} + \eta_k \left(\tilde z_{i,k}-z_{i,k}\right)\\
	\label{SYN6}
	& P^g_{i,k+1}= P^g_{i,k}+ \eta_k \left(\tilde P^g_{i,k}-P^g_{i,k}\right)
	\end{align}
\end{subequations}
where $\sigma_{\mu}, \sigma_{z}, \sigma_{g}, \eta_k$ are positive constants, and $\sigma_{\mu}, \sigma_{z}, \sigma_{g}$ are supposed to be chosen such that $\Phi$ in \eqref{SYNc0} (given in Section V.B) is positive definite.

%
%
%
%
%
%
%
%
%
%
%

In \eqref{SYN}, the load demand $P_i^d$ is usually difficult to know. We will provide a practical method to estimate $P^g_i-{P}^d_i$ instead of directly measuring $P_i^d$  in the implementation, as explained in Section VI. Later in Section IV, we will show that the SDPD is simply a special case of the asynchronous algorithm. Therefore, its properties, such as the optimality of the equilibrium point and the convergence, are immediate consequence of the results of asynchronous algorithm, which are skipped here.

\section{Distributed Asynchronous Algorithm}
In this section, we first introduce several typical types of asynchrony existing in MGs. Then, we devise an asynchronous algorithm by modifying Algorithm SDPD.
\subsection{Asynchrony in Microgrids}
In SDPD, each MG gathers information, computes locally and conveys new information to its neighbors over the communication graph. In this process, asynchrony may arise in each step. When gathering information, individual MGs may have different sampling rates, which results in non-identical computation rates accordingly. In addition, other imperfect communication situations such as time delay caused by  congestion or even failure are very common in power systems, which essentially result in asynchrony. 

In  synchronous computation, an MG has to wait for the slowest neighbor to complete the computation by inserting certain idle time. Communication delay, congestion or even package loss can further lengthen the waiting time. This process is illustrated in Fig.\ref{fig:async}(a). Thus, the slowest MG and communication channel may cripple the system in the synchronous execution. In contrast, the MGs with asynchronous computation do not need to wait and computes continuously with little idling, as shown in Fig.\ref{fig:async}(b). Even if some of its neighbors fail to update in time, the MG can use the previously stored information. That means, the MG could  execute an iteration without the latest information from its neighbors. 

\begin{figure}[t]
	\centering
	\includegraphics[width=0.45\textwidth]{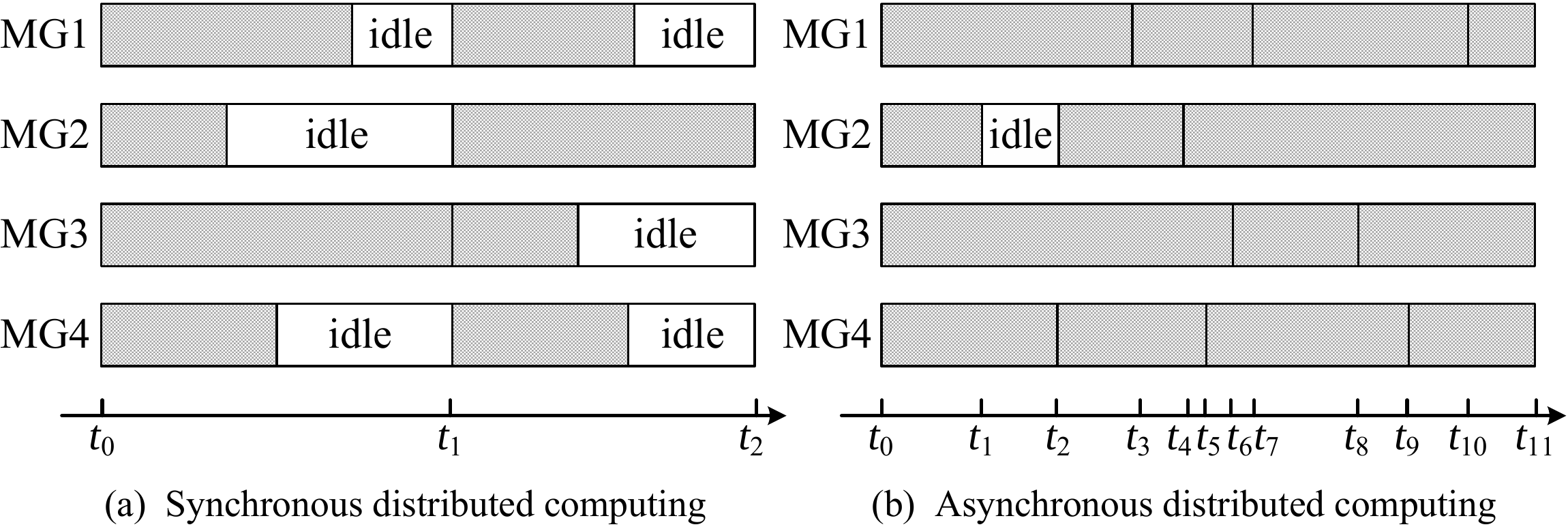}
	\caption{Synchronous versus asynchronous computation}
	\label{fig:async}
\end{figure}

\subsection{Asynchronous Algorithm}
In this subsection, we propose an \underline{as}ynchronous \underline{d}istributed algorithm for \underline{p}ower \underline{d}ispatch (ASDPD) based on SDPD. Different from the iteration number $k$ in \eqref{SYN}, here each MG has its own iteration number $k_i$, implying that a \emph{local clock} is used instead of the global clock. At each iteration $k_i$, MG $i$ computes in the following way.
\begin{subequations}
	\label{ASYN}
\begin{align}
		\label{ASYN1}
		&\tilde \mu_{i,k_i}= \mu_{i,k_i-\tau_{i}^{k_i}}+\sigma_{\mu}\bigg( - \sum\limits_{j\in N_i}\big(\mu_{i,k_i-\tau_{i}^{k_i}}-\mu_{j,k_j-\tau_{j}^{k_j}}\big) \nonumber \\
		&\quad + \sum\limits_{j\in N_i}\big(z_{i,k_i-\tau_{i}^{k_i}}-z_{j,k_j-\tau_{j}^{k_j}}\big) +P^g_{i,k_i-\tau_{i}^{k_i}}- P^d_i \bigg)
		\\
		\label{ASYN2}
		&\tilde z_{i,k_i}= z_{i,k_i-\tau_{i}^{k_i}} -\sigma_{z} \bigg(\sum\limits_{j\in N_i } \big(\mu_{i,k_i-\tau_{i}^{k_i}}-\mu_{j,k_j-\tau_{j}^{k_j}}\big)  \nonumber \\
		& \qquad - \sum\limits_{j\in N_i\cup N_i^2}2\sigma_{\mu}\ell_{ij}\big(\mu_{i,k_i-\tau_{i}^{k_i}}-\mu_{j,k_i-\tau_{j}^{k_j}}\big)\nonumber\\
		& \qquad + \sum\limits_{j\in N_i\cup N_i^2}2\sigma_{\mu}\ell_{ij}\big(z_{i,k_i-\tau_{i}^{k_i}}-z_{j,k_j-\tau_{j}^{k_j}}\big)\nonumber\\
		& \qquad + \sum\limits_{j\in N_i}2\sigma_{\mu}\big(P^g_{i,k_i-\tau_{i}^{k_i}}- P^d_i\big) \bigg) 
		\\
		\label{ASYN3}
		\begin{split}
		&\tilde P^g_{i,k_i}=\mathcal{P}_{\Omega_i}\left(P^g_{i,k_i-\tau_{i}^{k_i}} - \sigma_{g}\big(f_i^{'}(P^g_{i,k_i-\tau_{i}^{k_i}})+2\tilde \mu_{i,k_i} \right. \\ &\left.\qquad\qquad\qquad\qquad\qquad\qquad\quad \qquad-\mu_{i,k_i-\tau_{i}^{k_i}}\big)\right) \end{split}		\\
		\label{ASYN4}
		& \mu_{i,k_i+1}= \mu_{i,k_i-\tau_{i}^{k_i}} + \eta_k \big(\tilde \mu_{i,k_i}-\mu_{i,k_i-\tau_{i}^{k_i}}\big)\\
		\label{ASYN5}
		& z_{i,k_i+1}= z_{i,k_i-\tau_{i}^{k_i}} + \eta_k \big(\tilde z_{i,k_i}-z_{i,k_i-\tau_{i}^{k_i}}\big)\\
		\label{ASYN6}
		& P^g_{i,k_i+1}= P^g_{i,k_i-\tau_{i}^{k_i}}+ \eta_k \big(\tilde P^g_{i,k_i}-P^g_{i,k_i-\tau_{i}^{k_i}}\big)
\end{align}
\end{subequations}
where $ \ell_{ij} $ is the $i$th row and $j$th column element of matrix $L^2=L\times L$ and $\tau_{i}^k$ is the random time delay. $ \ell_{ij} \neq 0 $ holds only if $ j\in N_i\cup N_i^2   $ \cite{anderson2018distributed}. Denote $w=(\mu, z, P^g)$.  $w_{i,k_i-\tau_{i}^{k_i}}$ is the state of MG $i$ at iteration $k$, and $w_{j,k_j-\tau_{j}^{k_j}}$ is the latest information obtained from MG $j$.
Considering each MG has its own local clock, we  have the following asynchronous algorithm. 
\begin{algorithm}
	\caption{\textit{ASDPD}}
	\label{algorithm1}
	\textbf{Input:} For MG $i$, the input is $\mu_{i,0}, z_{i,0}\in\mathbb{R}^n$, $P_{i,0}^g\in \Omega_i$.
	
	
	\textbf{Iteration at} \textit{$k_i$}: Suppose MG $i$'s clock ticks at time $k_i$, then MG $i$ is activated and updates its local variables as follows: 
	
	$\ \ $\textbf{Step 1: }\textbf{Reading phase}
	
	$\quad$Get $\mu_{j,k_j-\tau_{j}^{k_j}}, z_{j,k_j-\tau_{j}^{k_j}}$ from its neighbors' and two-hop neighbors' output cache. 
	
	$\ \ $\textbf{Step 2:} \textbf{Computing phase}
	
	$\quad$Calculate $\tilde \mu_{i,k_i}, \tilde z_{i,k_i}$ and $\tilde P^g_{i,k_i}$ according to \eqref{ASYN1}, \eqref{ASYN2} and \eqref{ASYN3} respectively.
	
	$\quad$Update $ \mu_{i,k_i+1}, z_{i,k_i+1}$ and $ P^g_{i,k_i+1}$ according to \eqref{ASYN4}, \eqref{ASYN5} and \eqref{ASYN6} respectively.
	
	$\ \ $\textbf{Step 3:} \textbf{Writing phase}
	
	$\quad$Write $\mu_{i,k_i+1}$, $z_{i,k_i+1}$ to its output cache and  $\mu_{i,k_i+1}$, $z_{i,k_i+1}$, $P^g_{i,k_i+1}$ to its local storage. Increase $k_i$ to $k_i+1$.
\end{algorithm}

\begin{remark}
	In Algorithm \ref{algorithm1}, if MG $i$ is activated, it will read the latest information from its neighbors.  Even if some neighbors are not accessible in time due to communication issue, it can still execute the iteration by using the previous information stored in its input cache. Despite asynchrony caused by different reasons, MG $i$ only concerns whether the latest information comes, which implies that their effect can be characterized by the time interval between two successive iterations. Thus, our algorithm can admit different types of asynchrony. 	
\end{remark}
\begin{remark}
	As the element $ \ell_{ij} \neq 0 $ holds only if $ j\in N_i\cup N_i^2   $, the ASDPD is still distributed. Similar settings are also used in \cite{tang2019fast,ramirez2018distributed,anderson2018distributed}.
	However, it may make the communication graph denser. In the Section VI, we will show that the power system can be treated as a part of solver. Then, we can carry out the ASDPD by local measurement and neighboring communication. 
\end{remark}

\section{Optimality and Convergence Analysis}
In this section, we analyze the optimality of the equilibrium point of dynamic system \eqref{ASYN}, as well as the convergence of Algorithm \ref{algorithm1}. To this end, we need to introduce a sequence of global iteration numbers that serve as a reference \emph{global clock} to unify the local iterations of individual MGs in a coherence manner \cite{cao2008agreeing}. Note that the global clock is only used for convergence analysis, but not required in ASDPD. 

Specifically, we queue $k_i$ of all MGs in the order of  time, and use a new number $k$ to denote the $k$th iteration in the queue. This treatment is shown in Fig.\ref{merge_k} by taking two MGs as an example. Suppose that, at the iteration $k$, the probability that MG $i$ is activated to update its local variables follows a uniform distribution. Hence, each MG is activated with the same probability, which simplifies the convergence proof.

\begin{figure}[htp]
	\centering
	\includegraphics[width=0.42\textwidth]{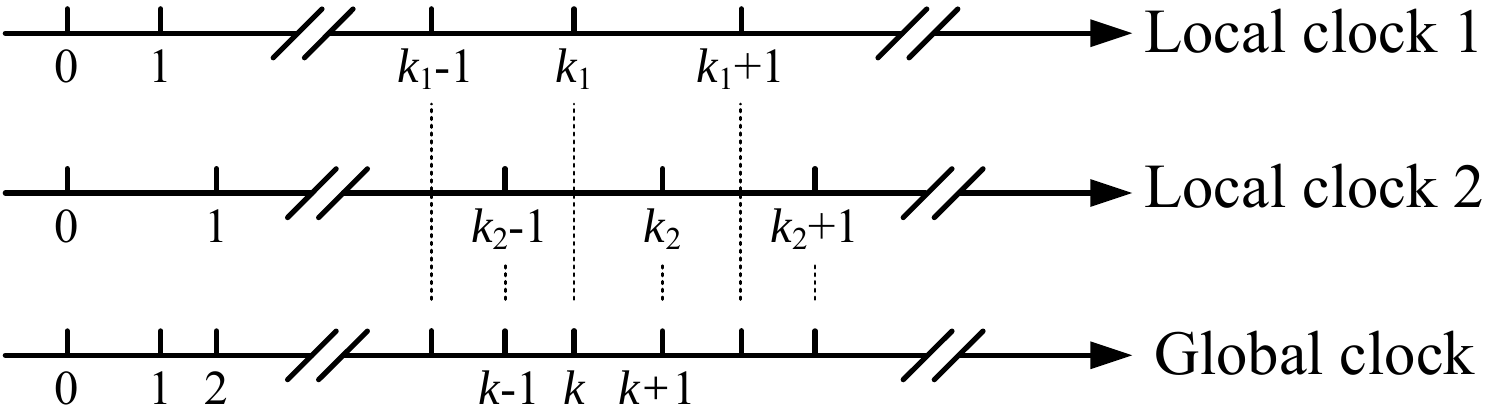}
	\caption{Local clocks versus global clock}
	\label{merge_k}
\end{figure}

To prove the convergence, we first convert the synchronous algorithm to a fixed-point iteration with an averaged operator. Then a \emph{nonexpansive} operator is constructed, leading to the convergence results of the asynchronous algorithm. Finally, we provide an upper bound of the time delay.

\subsection{Algorithm Reformulation}
If the time delay is not considered, \eqref{ASYN} is degenerated to \eqref{SYN}. 
In this sense, the SDPD is a special case of ASDPD, and we only need to analyze the property of ASDPD. 
The compact form of \eqref{ASYN1} - \eqref{ASYN6} without delay, i.e., \eqref{SYN1}-\eqref{SYN6}, is
\begin{subequations}
	\begin{align}
		\label{SYNc1}
		\begin{split}
		&\tilde \mu_{k}= \mu_{k}+\sigma_{\mu}\left( - L \cdot \mu_{k} +L \cdot z_{k} +P^g_{k}- P^d \right) \end{split}\\
		\label{SYNc2}
		&\tilde z_{k}= z_{k}+\sigma_{z}\left( - 2 L \cdot {\tilde \mu}_{k} +L \cdot \mu_{k} \right)\\
		\label{SYNc3}
		&\tilde P^g_{k}=\mathcal{P}_{\Omega}\left(P^g_{k} - \sigma_{g}\left(\nabla f(P^g_{k})+2\tilde \mu_{k} -\mu_{k}\right)\right) \\
		\label{SYNc4}
		& \mu_{k+1}= \mu_{k} + \eta_k \left(\tilde \mu_{k}-\mu_{k}\right)\\
		\label{SYNc5}
		& z_{k+1}= z_{k} + \eta_k \left(\tilde z_{k}-z_{k}\right)\\
		\label{SYNc6}
		& P^g_{k+1}= P^g_{k}+ \eta_k \left(\tilde P^g_{k}-P^g_{k}\right)
	\end{align}
\end{subequations}
where, $\nabla f(P^g_{k})$ is the gradient of $f(P^g_{k})$. The subscript $ k_i $ is substitute by a global notation $k$. The equation \eqref{SYNc2} is obtained by combing \eqref{ASYN1} with \eqref{ASYN2}, and others are straightforward.

Next we show that $\eqref{SYNc1}-\eqref{SYNc6}$ can be converted into a fixed-point iteration problem with an averaged operator \cite{yi2018asynchronous,yi2017distributed}.

Equation \eqref{SYNc1} is equivalent to 
\begin{align}
\label{SYNd1}
	- L \cdot \mu_{k} - P^d&=-P^g_{k}-L\cdot z_{k}+\sigma_{\mu}^{-1}(\tilde \mu_{k}- \mu_{k}) \nonumber\\
	&=-L\tilde z_k-\tilde P^g_{k}+\sigma_{\mu}^{-1}(\tilde \mu_{k}- \mu_{k})\nonumber\\
	&\quad+L\cdot (\tilde z_{k}-  z_{k})+ \tilde P^g_{k}- P^g_{k}
\end{align}

Similarly, \eqref{SYNc2} is equal to 
\begin{align}
\label{SYNd2}
0 & = L \cdot \tilde\mu_{k}+ L \cdot (\tilde\mu_{k}-  \mu_{k}) + \sigma_{z}^{-1}(\tilde z_{k}- z_{k})
\end{align}

From the fact that $\mathcal{P}_{\Omega}(x)=({\rm{Id}}+N_{\Omega})^{-1}(x)$, \eqref{SYNc2} can be rewritten as $\tilde P^g_{k}=({\rm{Id}}+N_{\Omega})^{-1}$ $(P^g_{k} - \sigma_{g}(\nabla f(P^g_{k})+2\tilde \mu_{k} -\mu_{k}))$, or equivalently,
\begin{align}
\label{SYNd3}
	-\nabla f(P^g_{k})=2\tilde \mu_{k} -\mu_{k}+N_{\Omega}(\tilde P^g_{k})+\sigma_{g}^{-1}(\tilde P^g_{k}-P^g_{k})
\end{align}
Then, $\eqref{SYNd1}-\eqref{SYNd3}$ are rewritten as 
\begin{align}
\label{SYNc0}
	-\left[ {\begin{array}{*{20}{c}}
		L  \mu_{k}+P^d\\
		0\\
		\nabla f(P^g_{k})
		\end{array}} \right] = \left[ {\begin{array}{*{20}{c}}
		-\tilde P^g_{k}-L\tilde z_k\\
		L  \tilde\mu_{k}\\
		\tilde\mu_{k}+N_{\Omega}(\tilde P^g_{k})
		\end{array}} \right] +\Phi\left[ {\begin{array}{*{20}{c}}
		\tilde \mu_{k}- \mu_{k}\\
		\tilde z_{k}- z_{k}\\
		\tilde P^g_{k}-P^g_{k}
		\end{array}} \right]
\end{align}
where
\begin{align}\label{phi}
	\Phi=\left[ {\begin{array}{*{20}{c}}
		\sigma_{\mu}^{-1}I&L&I\\
		L& \sigma_{z}^{-1}I&0\\
		I&0&\sigma_{g}^{-1}I
		\end{array}} \right]
\end{align}

Define the following two operators
\begin{align}
	\label{OperU}
	&\mathcal{B}:\left[ {\begin{array}{*{20}{c}}
		\mu\\
		z\\
		P^g
		\end{array}} \right] \mapsto \left[ {\begin{array}{*{20}{c}}
		L  \mu +P^d\\
		0\\
		\nabla f(P^g)
		\end{array}} \right] \\
	\label{OperB}
	&\mathcal{U}:\left[ {\begin{array}{*{20}{c}}
		\mu\\
		z\\
		P^g
		\end{array}} \right] \mapsto \left[ {\begin{array}{*{20}{c}}
		- P^g-L z\\
		L  \mu\\
		\mu+N_{\Omega}( P^g)
		\end{array}} \right]
\end{align}

From \cite[Lemma 5.6]{yi2017distributed}, we know $({\rm{Id}}+\Phi^{-1}\mathcal{U})^{-1}$ exists and is single-valued. Denote $w^i=(\mu_i,z_i,P^{g}_i)$, $w=(w^i)$, $\tilde w=$ $(\tilde \mu, \tilde z_i,\tilde P^g)$. We show that $\eqref{SYNc1}-\eqref{SYNc4}$ can be  written as 
\begin{align}
	\label{Opera1}
	\tilde w_k&=\mathcal{T}(w_k)\\
	\label{Opera2}
	w_{k+1}&=w_k+\eta_k(\tilde w_k-w_k)
\end{align}
where the operator $\mathcal{T}=({\rm{Id}}+\Phi^{-1}\mathcal{U})^{-1}({\rm{Id}}-\Phi^{-1}\mathcal{B})$.

For \eqref{Opera1}, $\tilde w_k=\mathcal{T}(w_k)$ is equivalent to 
\begin{align}
	-\mathcal{B}(w_k)=\mathcal{U}(\tilde w_k)+\Phi\cdot(\tilde w_k-w_k)
\end{align}
This is exactly \eqref{SYNc0}. In addition, it is straightforward to see that $\eqref{SYNc4}-\eqref{SYNc6}$ are equivalent to \eqref{Opera2}. 

Equations  $\eqref{Opera1}-\eqref{Opera2}$ can be further rewritten as
\begin{align}
\label{Opera}
w_{k+1}&=w_k+\eta_k(\mathcal{T}(w_k)-w_k)
\end{align}

Denote $a_{min}=\min\{a_i\},\ a_{max}=\max\{a_i\}$, $\forall i 
\in \mathcal{N}$. 
%
We have the following result about the operator $\mathcal{T}$. Denote the maximal eigenvalues of $L$ by $\sigma_{\max}$. We have the following result.
\begin{lemma}\label{average_operator}
	Take $\zeta=\min\{\frac{1}{\sigma_{\max}^2}, \frac{a_{min}}{a_{max}^2}\}$, $\kappa>\frac{1}{2\zeta}$, and the step sizes  $\sigma_{\mu}, \sigma_{z}, \sigma_{g}$ such that $\Phi-\kappa I$ is positive semi-definite. Then we have the following assertions under $\Phi$-induced norm. 
	\begin{enumerate}
		\item $\mathcal T$ is an averaged operator, and $\mathcal{T}\in \mathcal{A}\left(\frac{2\kappa\zeta}{4\kappa \zeta-1}\right)$;
		\item there exists a nonexpansive operator $\mathcal{R}$ such that 
		\begin{align*}
		\mathcal{T}=\left(1-\frac{2\kappa\zeta}{4\kappa \zeta-1}\right){\rm{Id}}+\frac{2\kappa\zeta}{4\kappa \zeta-1}\mathcal{R}
		\end{align*}
		\item operators $\mathcal{T}$ and $\mathcal{R}$ have the same fixed points, i.e., $Fix (\mathcal{T})=Fix (\mathcal{R})$.
	\end{enumerate}
\end{lemma}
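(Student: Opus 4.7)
The plan is to view $\mathcal{T}=({\rm{Id}}+\Phi^{-1}\mathcal{U})^{-1}({\rm{Id}}-\Phi^{-1}\mathcal{B})$ as a forward-backward composition and analyze it under the $\Phi$-induced inner product. The hypothesis that $\Phi-\kappa I$ is positive semidefinite (which in particular makes $\Phi$ positive definite) guarantees $\|\cdot\|_\Phi$ is a genuine norm and will let me pass between Euclidean and $\Phi$-weighted cocoercivity in a controlled way.

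First I would establish that $\mathcal{B}$ is $\zeta$-cocoercive in the Euclidean inner product. The $\mu$-block contribution bounds $\langle L(\mu-\mu'),\mu-\mu'\rangle$ from below by a multiple of $\|L(\mu-\mu')\|^2$, using the spectral decomposition of the symmetric positive semidefinite Laplacian $L$ with largest eigenvalue $\sigma_{\max}$. The $P^g$-block contribution uses strong monotonicity and Lipschitz continuity of $\nabla f$: $\langle \nabla f(P^g)-\nabla f(P^{g\prime}),\,P^g-P^{g\prime}\rangle \geq a_{min}\|P^g-P^{g\prime}\|^2 \geq (a_{min}/a_{max}^2)\|\nabla f(P^g)-\nabla f(P^{g\prime})\|^2$. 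The $z$-block contributes zero on both sides, so taking the minimum of the two block constants produces the cocoercivity modulus $\zeta$ stated in the lemma.

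Next I would lift cocoercivity to the $\Phi$-inner product. Using the identity $\langle \Phi^{-1}\mathcal{B}(w)-\Phi^{-1}\mathcal{B}(w'),\,w-w'\rangle_\Phi=\langle \mathcal{B}(w)-\mathcal{B}(w'),\,w-w'\rangle$ combined with the bound $\|\Phi^{-1}y\|_\Phi^2=\langle y,\Phi^{-1}y\rangle \leq \kappa^{-1}\|y\|^2$ (which follows from $\Phi-\kappa I$ being positive semidefinite), $\Phi^{-1}\mathcal{B}$ becomes $\kappa\zeta$-cocoercive under $\|\cdot\|_\Phi$. Since $\kappa\zeta>1/2$, the forward step ${\rm{Id}}-\Phi^{-1}\mathcal{B}$ is $1/(2\kappa\zeta)$-averaged in the $\Phi$-norm by the standard forward-step averaged lemma. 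In parallel, $\mathcal{U}$ is the sum of a skew-symmetric linear operator (monotone, full domain) and the maximally monotone normal cone $N_\Omega$, hence maximally monotone; consequently $\Phi^{-1}\mathcal{U}$ is maximally monotone under $\langle\cdot,\cdot\rangle_\Phi$ and its resolvent $({\rm{Id}}+\Phi^{-1}\mathcal{U})^{-1}$ is firmly nonexpansive, i.e., lies in $\mathcal{A}(1/2)$.

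Invoking the composition rule for averaged operators, namely that $T_1\in\mathcal{A}(\alpha_1)$ and $T_2\in\mathcal{A}(\alpha_2)$ imply $T_1 T_2\in\mathcal{A}\bigl(\tfrac{\alpha_1+\alpha_2-2\alpha_1\alpha_2}{1-\alpha_1\alpha_2}\bigr)$, and plugging in $\alpha_1=1/2$, $\alpha_2=1/(2\kappa\zeta)$, gives exactly $\mathcal{T}\in\mathcal{A}\bigl(\tfrac{2\kappa\zeta}{4\kappa\zeta-1}\bigr)$, proving (1). Part (2) is then the defining decomposition: set $\alpha=\tfrac{2\kappa\zeta}{4\kappa\zeta-1}$ and $\mathcal{R}=\alpha^{-1}(\mathcal{T}-(1-\alpha){\rm{Id}})$, which is nonexpansive by (1). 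Part (3) is the algebraic equivalence $x=\mathcal{T}(x)\Leftrightarrow \alpha x=\alpha\mathcal{R}(x) \Leftrightarrow x=\mathcal{R}(x)$, valid since $\alpha>0$. The main technical obstacle is the cocoercivity accounting in step one, specifically pinning down the precise constant $\sigma_{\max}^{-2}$ for the Laplacian block, together with verifying that the step-size choice in $\Phi$ from \eqref{phi} can be made jointly with $\Phi-\kappa I$ positive semidefinite; this amounts to Schur-complement-type inequalities coupling $\sigma_\mu,\sigma_z,\sigma_g$ to $\sigma_{\max}$ and $\kappa$.
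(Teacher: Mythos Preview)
Your proposal is correct and follows essentially the same approach as the paper's proof. The paper obtains the two key ingredients---$({\rm Id}+\Phi^{-1}\mathcal{U})^{-1}\in\mathcal{A}(1/2)$ and ${\rm Id}-\Phi^{-1}\mathcal{B}\in\mathcal{A}(1/(2\kappa\zeta))$---by citing \cite[Lemma~5.6]{yi2017distributed} and then composes them via \cite[Proposition~2.4]{COMBETTES2015Compositions}, whereas you unpack those citations from first principles (Euclidean cocoercivity of $\mathcal{B}$, the lift to the $\Phi$-norm using $\Phi\succeq\kappa I$, maximal monotonicity of $\mathcal{U}$, and the explicit averaged-composition formula); parts (2) and (3) are handled identically in both.
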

\begin{proof}
	For the assertion 1), we know $({\rm{Id}}+\Phi^{-1}\mathcal{U})^{-1}\in \mathcal{A}\left(\frac{1}{2}\right)$ and ${\rm{Id}}-\Phi^{-1}\mathcal{B} \in \mathcal{A}\left(\frac{1}{2\kappa\zeta}\right)$  from \cite[Lemma 5.6]{yi2017distributed}. Then, following from \cite[Proposition 2.4]{COMBETTES2015Compositions}, we know $\mathcal{T}\in \mathcal{A}\left(\frac{2\kappa\zeta}{4\kappa \zeta-1}\right)$. 
	
	From assertion 1) and definition of averaged operators, there exists a nonexpansive operator $\mathcal{R}$ such that 
	\begin{align}
	\label{nonexpansive}
	\mathcal{T}=\left(1-\frac{2\kappa\zeta}{4\kappa \zeta-1}\right){\rm{Id}}+\frac{2\kappa\zeta}{4\kappa \zeta-1}\mathcal{R}
	\end{align}
	Then, we have the assertion 2).
	
	Since $\mathcal{T}$ is $\frac{2\kappa\zeta}{4\kappa \zeta-1}$-averaged, $\mathcal{T}$ is also a nonexpansive operator \cite[Remark 4.24]{bauschke2011convex}. For any nonexpansive operator $\mathcal{T}$, $Fix(\mathcal{T})\neq\emptyset$ \cite[Theorem 4.19]{bauschke2011convex}.
	Suppose $x$ is a fixed point of $\mathcal{T}$, and we have $\mathcal{T}(x)=x=\left(1-\frac{2\kappa\zeta}{4\kappa \zeta-1}\right){\rm{Id}}(x)+\frac{2\kappa\zeta}{4\kappa \zeta-1}\mathcal{R}(x)$. Thus, $\frac{2\kappa\zeta}{4\kappa \zeta-1}{\rm{Id}}(x)=\frac{2\kappa\zeta}{4\kappa \zeta-1}\mathcal{R}(x)$, which is equivalent to $x=\mathcal{R}(x)$. 
	
	Similarly, suppose $x$ is a fixed point of $\mathcal{R}$, and we have 
	$\mathcal{T}(x)=\left(1-\frac{2\kappa\zeta}{4\kappa \zeta-1}\right){\rm{Id}}(x)+\frac{2\kappa\zeta}{4\kappa \zeta-1}\mathcal{R}(x)=x$. Thus, assertion 3) holds, which completes the proof.
\end{proof}

So far, we convert the synchronous algorithm into a fixed-point iteration problem with an averaged operator (see \eqref{Opera}). Moreover, we also construct a nonexpansive operator $\mathcal{R}$. it enables us to prove the convergence of the asynchronous algorithm ASDPD, as we explain in the next subsection.

\subsection{Optimality of the equilibrium point}
Considering dynamic system \eqref{ASYN}, we give the following  definition of its equilibrium point.
\begin{definition}
	\label{def:ep.1}
	A point $w^*=(w^*_i, i\in\mathcal{N})=(\mu^*_i, z^*_i, P^{g*}_i)$ 
	is an {equilibrium point} of  system \eqref{ASYN} if $\lim_{k_i\rightarrow +\infty} w_{k_i}=w_i^*$ holds for all $i$.
\end{definition}

Then, we have the following result.
\begin{theorem}
	\label{optimality}
	Suppose Assumption \ref{Slater}  holds. The component $P^{g*}, \mu^*$ of the equilibrium point $w^*$ is the primal-dual optimal solution to \eqref{eq:opt.general1}.
\end{theorem}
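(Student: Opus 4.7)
The plan is to translate the equilibrium equations of \eqref{ASYN} into the KKT conditions of \eqref{eq:opt.general1} and then invoke convexity together with Assumption \ref{Slater} to conclude optimality. At an equilibrium $w^*$, every MG's state is constant in $k_i$, so the delayed values $w_{i,k_i-\tau_i^{k_i}}$ coincide with $w^*_i$, and the update equations \eqref{ASYN4}--\eqref{ASYN6} force $\tilde w^* = w^*$. Substituting this into \eqref{ASYN1}--\eqref{ASYN3} collapses the asynchronous iteration to its synchronous compact form, so that $w^*$ is a fixed point of the operator $\mathcal{T}$ constructed in Section V.A (equivalently, of the nonexpansive $\mathcal{R}$ of Lemma \ref{average_operator}).

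Plugging $\tilde w^* = w^*$ into the stacked form \eqref{SYNc1}--\eqref{SYNc3} produces three stationarity relations:
\begin{align*}
&-L\mu^* + Lz^* + P^{g*} - P^d = 0,\\
&-L\mu^* = 0,\\
&P^{g*} = \mathcal{P}_{\Omega}\bigl(P^{g*} - \sigma_g(\nabla f(P^{g*}) + \mu^*)\bigr).
\end{align*}
Because the communication graph is connected, the null space of $L$ is spanned by $\mathbf{1}$, so the second relation yields $\mu^* = \mu^*_0 \mathbf{1}$ for some scalar $\mu^*_0$. Premultiplying the first relation by $\mathbf{1}^{\mathrm{T}}$ and using $\mathbf{1}^{\mathrm{T}} L = 0$ recovers the global power balance $\sum_{i\in\mathcal{N}} P_i^{g*} = \sum_{i\in\mathcal{N}} P_i^d$, i.e., \eqref{eq:opt.general1a}; the projection in the third relation certifies $P^{g*}\in\Omega$, i.e., \eqref{eq:opt.general1b}.

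Using the identity $\mathcal{P}_{\Omega}(x) = ({\rm{Id}} + N_{\Omega})^{-1}(x)$ from Section II, the third relation is equivalent to $0 \in \nabla f(P^{g*}) + \mu^*_0 \mathbf{1} + N_{\Omega}(P^{g*})$, which decouples componentwise into $0 \in f_i'(P_i^{g*}) + \mu^*_0 + N_{\Omega_i}(P_i^{g*})$ for each $i\in\mathcal{N}$. This is precisely the stationarity-plus-complementary-slackness KKT condition of \eqref{eq:opt.general1}: the scalar $\mu^*_0$ plays the role of the Lagrange multiplier $\mu$ of \eqref{eq:opt.general1a}, while the inclusion in $N_{\Omega_i}(P_i^{g*})$ encodes nonnegative multipliers $\gamma_i^{\pm}$ of \eqref{eq:opt.general1b} together with the usual complementarity.

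Finally, each $f_i$ is strongly convex, $\Omega$ is a nonempty convex set, and Assumption \ref{Slater} supplies a Slater point, so the KKT system is both necessary and sufficient for primal-dual optimality. Hence $P^{g*}$ is the primal optimum of \eqref{eq:opt.general1} and $\mu^*_0$ (equivalently $\mu^*$) is an optimal dual multiplier, establishing the claim. The main subtlety is the initial step of identifying $\tilde w^*$ with $w^*$ despite the delays and local-clock structure; this is resolved by noting that the equilibrium by definition freezes every $w_{i,k_i}$, hence every delayed copy, so the asynchronous dynamics reduce to the synchronous fixed-point equations already analyzed in Lemma \ref{average_operator}.
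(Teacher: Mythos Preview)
Your proof is correct and follows essentially the same route as the paper: reduce the asynchronous equilibrium to the three fixed-point relations $-L\mu^*+Lz^*+P^{g*}-P^d=0$, $L\mu^*=0$, and $-\nabla f(P^{g*})\in\mu^*+N_\Omega(P^{g*})$, then read these off as the KKT system of \eqref{eq:opt.general1} and invoke convexity plus Assumption~\ref{Slater}. You are somewhat more explicit than the paper in justifying why delayed values coincide with $w^*$ at equilibrium and in using $\mathbf{1}^{\mathrm{T}}L=0$ and connectedness of the graph, but the argument is otherwise identical.
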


\begin{proof}
	By $\eqref{ASYN1}-\eqref{ASYN6}$ and Definition \ref{def:ep.1}, we have 
	\begin{subequations}
		\begin{align}
		\label{ASYNeq01}
		0&=- L \cdot \mu^* +L \cdot z^* +P^{g*}- P^d\\
		\label{ASYNeq02}
		0&= L \cdot \mu^* \\
		\label{ASYNeq03}
		-\nabla f(P^{g*})&=N_{\Omega}(P^{g*})+\mu^*	
		\end{align}
	\end{subequations}
	Then, we have 
	\begin{subequations}\label{equilibrium0}
		\begin{align}
		&0=\sum\limits_{i\in \mathcal{N}} P_i^{g* }- \sum\limits_{i\in \mathcal{N}}  P_i^d\\
		&\mu_{i}^*=\mu_{j}^*=\mu_0^* \\
		-\nabla f(P^{g*})&=N_{\Omega}(P^{g*})+\mu^*	
		\end{align}
		where $\mu_0^*$ is a constant. By \cite[Theorem 3.25]{ruszczynski2006nonlinear}, we know    \eqref{equilibrium0} is exactly the KKT condition of the problem \eqref{eq:opt.general1}.
	\end{subequations}
	In addition, \eqref{eq:opt.general1} is a convex optimization problem and Slater's condition holds, which completes the proof.
\end{proof}

\subsection{Convergence analysis of asynchronous algorithm}

In this subsection, we investigate the convergence of ASDPD. The basic idea is to treat ASDPD as a randomized block-coordinate fixed-point iteration problem with delayed information. And then the results in \cite{peng2016arock} can be applied. 

Define vectors $\phi_i\in \mathbb{R}^{3n}, i\in \mathcal{N}$. The $j$th entry of $\phi_i$ is  denoted by $[\phi_i]_j$. Define $[\phi_i]_j=1$ if the $j$th coordinate of $w$ is also a coordinate of $w^i$, and $[\phi_i]_j=0$, otherwise. Denote by $\varphi$  a random variable (vector) taking values in $\phi_i, i\in \mathcal{N}$. Then $\textbf{Prob}(\varphi=\phi_i)=1/n$ also follows a uniform distribution. Let $\varphi_k$  be the value of $\varphi$ at the $k$th iteration. Then, a randomized block-coordinate fixed-point
iteration for \eqref{Opera2} is given by
\begin{align}
	\label{random_fixed}
	w_{k+1}=w_k+\eta_k\varphi_k\circ(\mathcal{T}(w_k)-w_k)
\end{align}
where $\circ$ is the Hadamard product of two matrices. Here, we assume only one MG is activated at each iteration without loss of generality\footnote{Note that this model helps formulate the algorithm and analyze its convergence. In  implementation, we allow that two or more MGs are activated simultaneously, which can be modeled as two or more iterations in analysis.  }. 

Since  \eqref{random_fixed} is delay-free,  we further modify it for considering delayed information, which is 
\begin{align}
	\label{delay_random_fixed}
	w_{k+1}=w_k+\eta_k\varphi_k\circ(\mathcal{T}(\hat w_k)- w_k)
\end{align}
where $\hat w_k$ is the delayed information at iteration $k$. Note that, here, $k$ represents the \emph{global clock} defined in Section V. 
 We will show that Algorithm \ref{algorithm1} can be written as \eqref{delay_random_fixed} if $\hat w_k$ is properly defined.
Suppose MG $i$ is activated at the iteration $k$, then $\hat w_k$ is defined as follows. For MG $i$ and $j\in N_i$, replace $\mu_{i,k}$, $z_{i,k}$ and $P^g_{i,k}$ with $\mu_{i,k-\tau_{i}^k}$, $z_{i,k-\tau_{i}^k}$ and $P^g_{i,k-\tau_{i}^k}$. Similarly, replace $\mu_{j,k}$, $z_{j,k}$ with $\mu_{j,k-\tau_{j}^k}$ and $z_{j,k-\tau_{j}^k}$. With the random variable $\varphi$, variables of inactivated MGs are kept the same with the previous iterations. Then we have \eqref{delay_random_fixed}.

Next we make the following assumption.
\begin{assumption}
	\label{bounded_delay}
	The  time interval between two consecutive  iterations is bounded by $\chi$, i.e., $\sup_k\max_{i\in\mathcal{N}}\{\max\{\tau_i^k\}\}\le \chi$.
\end{assumption}

With the assumption, we have the convergence result.
\begin{theorem}\label{convergence}
	Suppose Assumptions \ref{Slater}, \ref{bounded_delay} hold. Take $\zeta=\min\{\frac{1}{\sigma_{\max}^2}, \frac{a_{min}}{a_{max}^2}\}$, $\kappa>\frac{1}{2\zeta}$, and the step-sizes $\sigma_{\mu}, \sigma_{z}, \sigma_{g}$ such that $\Phi-\kappa I$ is positive semi-definite. Choose $0<\eta_k< \frac{1}{1+2\chi/\sqrt{n}} \frac{4\kappa \zeta-1}{2\kappa\zeta}$. Then, with ASDPD, $P_k^g$ and $\mu_k$ converge to the primal-dual optimal solution of problem \eqref{eq:opt.general1} with probability 1.
\end{theorem}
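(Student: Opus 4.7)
The plan is to recognize ASDPD as an instance of the randomized block-coordinate fixed-point iteration with delayed reads of \cite{peng2016arock}, invoke the ARock almost-sure convergence theorem, and then translate fixed-point convergence into primal-dual optimality through Theorem~\ref{optimality}.

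First I would verify that Algorithm~\ref{algorithm1} is exactly the iteration \eqref{delay_random_fixed} driven by the operator $\mathcal{T}$ constructed in Section~V.A. The delay-free synchronous iteration has already been cast as $w_{k+1}=w_k+\eta_k(\mathcal{T}(w_k)-w_k)$ in \eqref{Opera} through the operator-splitting manipulations. Under the global-clock reformulation of Section~V, exactly one block $i$ is activated per global iteration with uniform probability $1/n$, which matches the coordinate selection via $\varphi_k$. The more intricate asynchronous update \eqref{ASYN2}, with its two-hop neighbor terms involving $\ell_{ij}$, should be derived by substituting the expression for $\tilde\mu_{i,k_i}$ from \eqref{ASYN1} into the asynchronous analogue of \eqref{SYN2} and rearranging so that every right-hand side quantity is a stale block-coordinate value available locally; this rearrangement is exactly what makes the update implementable as the $i$-th component of $\mathcal{T}(\hat w_k)$ with $\hat w_k$ as defined in the text.

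Next I would apply Lemma~\ref{average_operator}: under the stated hypotheses on $\zeta$, $\kappa$ and on the step sizes $\sigma_{\mu},\sigma_{z},\sigma_{g}$, $\mathcal{T}$ is $\alpha$-averaged in the $\Phi$-induced norm with $\alpha=\frac{2\kappa\zeta}{4\kappa\zeta-1}$, and $\mathcal{T}=(1-\alpha){\rm Id}+\alpha\mathcal{R}$ for a nonexpansive operator $\mathcal{R}$ with $Fix(\mathcal{R})=Fix(\mathcal{T})$. The iteration \eqref{delay_random_fixed} can then be rewritten as an asynchronous Krasnosel'ski{\v{i}}--Mann iteration on $\mathcal{R}$ with effective relaxation $\alpha\eta_k$. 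The stated bound $\eta_k<\frac{1}{1+2\chi/\sqrt{n}}\cdot\frac{1}{\alpha}$ is exactly the ARock step-size condition for a nonexpansive operator under uniform sampling over $n$ coordinates with delays bounded by $\chi$ (Assumption~\ref{bounded_delay}). The almost-sure convergence theorem of \cite{peng2016arock} therefore yields $w_k\to w^*$ almost surely for some $w^*\in Fix(\mathcal{T})$, which is an equilibrium of \eqref{ASYN} in the sense of Definition~\ref{def:ep.1}. Theorem~\ref{optimality} then guarantees that the $\mu^*$ and $P^{g*}$ components of $w^*$ are primal-dual optimal for \eqref{eq:opt.general1}, finishing the proof.

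The main obstacle I anticipate is the algebraic bookkeeping in the reduction step: showing that the ASDPD update for MG~$i$ coincides coordinate-by-coordinate with the $i$-block of $\eta_k(\mathcal{T}(\hat w_k)-w_k)$ after freezing inactive coordinates and inserting delayed reads. This is the real source of the two-hop $L^2$ terms appearing in \eqref{ASYN2} and must be checked with care. A secondary concern is that Lemma~\ref{average_operator} delivers averagedness in the $\Phi$-induced rather than the Euclidean norm, so the ARock step-size restriction has to be reconciled with this non-standard metric; this is precisely what produces the factor $\frac{4\kappa\zeta-1}{2\kappa\zeta}$ in the bound on $\eta_k$.
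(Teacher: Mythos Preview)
Your proposal is correct and follows essentially the same route as the paper: cast ASDPD as the delayed randomized block-coordinate iteration \eqref{delay_random_fixed}, use Lemma~\ref{average_operator} to write $\mathcal{T}=(1-\alpha){\rm Id}+\alpha\mathcal{R}$ with $\alpha=\tfrac{2\kappa\zeta}{4\kappa\zeta-1}$, reduce to an ARock iteration on the nonexpansive $\mathcal{R}$ with effective step $\alpha\eta_k$, invoke the ARock convergence theorem under the stated step-size bound, and conclude via $Fix(\mathcal{R})=Fix(\mathcal{T})$ and Theorem~\ref{optimality}. The paper makes explicit one simplification you only hint at: because the activated block's own coordinates are unchanged between its last update and the current global tick, $\varphi_k\circ(\hat w_k-w_k)=0$, which is what collapses the mixed expression $(1-\alpha)\hat w_k-w_k+\alpha\mathcal{R}(\hat w_k)$ to $\alpha(\mathcal{R}(\hat w_k)-\hat w_k)$ and puts the iteration in exact ARock form.
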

\begin{proof}
	Combining \eqref{nonexpansive} and \eqref{delay_random_fixed}, we have 
	\begin{align}
	w_{k+1}&=w_k+\eta_k\varphi_k\circ\left(\left(1-\frac{2\kappa\zeta}{4\kappa \zeta-1}\right)\hat w_k \right. \nonumber\\
	&\qquad\qquad\qquad\qquad\left.- w_k+\frac{2\kappa\zeta}{4\kappa \zeta-1}\mathcal{R}(\hat w_k)\right) \nonumber\\
	\label{nonexpansive2}
	&=w_k+\eta_k\varphi_k\circ\left(\hat w_k- w_k	 \right.\nonumber\\
	&\qquad\qquad\qquad\qquad\left.+\frac{2\kappa\zeta}{4\kappa \zeta-1}(\mathcal{R}(\hat w_k)-\hat w_k)\right)
	\end{align}
	With $w_{i,k-\tau_{i}^k}=w_{i,k-\tau_{i}^k+1}=,\cdots,=w_{i,k}$, we have $\varphi_k\circ\left(\hat w_k- w_k\right)=0$. Thus, \eqref{nonexpansive2} is equivalent to 
	\begin{align}
	\label{nonexpansive3}
	w_{k+1}	&=w_k+\frac{2\eta_k\kappa\zeta}{4\kappa \zeta-1}\varphi_k\circ\left(\mathcal{R}(\hat w_k)-\hat w_k\right)
	\end{align}
	
	Invoking \cite{peng2016arock},  \eqref{nonexpansive3} with the nonexpansive operator $\mathcal{R}$ is essentially a kind of the  ARock algorithms suggested in \cite{peng2016arock}. Hence the convergence results given in that paper can directly be applied. Indeed,  Lemma 13 and Theorem 14 of \cite{peng2016arock} indicate that, the convergence of ARock is guaranteed by the condition 
	\begin{align}
	0<\frac{2\eta_k\kappa\zeta}{4\kappa \zeta-1}<\frac{1}{1+2\chi/\sqrt{n}}.
	\end{align} 
	Therefore, if $\eta_k$ satisfies $0<\eta_k< \frac{1}{1+2\chi/\sqrt{n}} \frac{4\kappa \zeta-1}{2\kappa\zeta}$, then $w_k$ converges to a random variable that takes value in the fixed points (denoted by $w_k^*$) of $\mathcal{R}$ with probability 1. Recalling $Fix (\mathcal{T})=Fix (\mathcal{R})$ and Theorem  \ref{optimality},  we know  $P_k^{g*}$ and $\mu_k^*$, as components of $w_k^*$, constitute the primal-dual optimal solution to the optimization problem \eqref{eq:opt.general1}. This completes the proof.
\end{proof}

Choose $\kappa=\frac{1}{2\zeta}+\epsilon$, where $\epsilon>0$ but very small. Then the upper bound of $\eta_k$ can be estimated by 
\begin{align*}
	\frac{1}{1+2\frac{\chi}{\sqrt{n}}} \frac{4\kappa \zeta-1}{2\kappa\zeta}&=\frac{1}{1+2\frac{\chi}{\sqrt{n}}} \frac{1+4\zeta \epsilon}{1+2\zeta\epsilon} \approx \frac{1}{1+2\frac{\chi}{\sqrt{n}}}
\end{align*}
Thus, there is $\eta_k<1$. Moreover, the upper bound of $\eta_k$ will decrease when the time delay increases, i.e., $\chi$ increases.

Given a fixed $\eta_k$ and a very small $\epsilon >0$, we have 
\begin{align}
	\label{max_Delay}
	\chi<\frac{\sqrt{n}(1-\eta_k)}{2\eta_k}	
\end{align}
Thus, the upper bound of acceptable time delay is approximately proportional to the square root of the number of MGs, which provides a helpful insight for controller design.

\section{Real-time Implementation}
The rapid variation of renewable generations and load demand requires that the controller can be implemented in real-time. In this section, the implementation of the ASDPD in both AC and DC MGs is introduced. First, we take the power system as a solver, which helps to eliminate the variables $ \tilde z$ and $z $ in ASDPD. Then, the real-time control diagram is illustrated. Finally, the optimality of results using such implementation method is proved.
\subsection{Taking the power system as a solver}
\subsubsection{Main idea}
Recalling the synchronous version \eqref{SYN} and \eqref{ASYNeq01}, the item $ \sum\limits_{j\in N_i}(z_{i,k}-z_{j,k}) $ in \eqref{SYN1} is utilized to balance the difference between $ P^g_{i,k}$ and  $P^d_i $. Denote $ \delta_{ij}=z_{j}-z_{i} $, and the last three terms of \eqref{SYN1} are $ P^g_{i,k}- P^d_i - \sum\limits_{j\in N_i}\delta_{ij,k} $.
From \eqref{ASYNeq01}, we know $0= \sum\limits_{j\in N_i}\delta_{ij,k}^* +P^{g*}_{i}- P^d_i $. This motivates us the power balance equation of each bus $ 0=P_i^{g*}-P_i^d -\sum\limits_{j\in N_i}P^*_{ij} $, where $ P^*_{ij} $ is the line power from bus $i$ to bus $j$ in the steady state. Thus, $ \delta_{ij} $ has the similar role to the line power $ P_{ij} $. This is a very important observation as $ P_{ij} $ is automatically implemented with physical dynamics of the power systems. We only need to measure it, which also implies that the computation of $\tilde z$, $z$ can be avoided. This is what we mean \textit{taking the power system as a solver}. Moreover, we can take $ P^g_{i,k}- P^d_i + \sum\limits_{j\in N_i}(z_{i,k}-z_{j,k}) $ as a whole and estimate it in both AC and DC MGs. 
\subsubsection{ AC MGs}
In AC MGs, the swing equation of AC bus $i$ is 
\begin{align}
	\label{omega_dynamic}
	M_i\dot \omega_i =P_i^g-P_i^d -D_i\omega_i -\sum\limits_{j\in N_i}P_{ij},\ i\in \mathcal {N}_{ac}
\end{align}
where, $M_i >0$, $D_i>0$ are constants, and $P_{ij}$ is the line power from bus $i$ to bus $j$. This model is suitable for both synchronous generators and converters \cite{de2016modular, Distributed_II:Wang, wang2018nonlinear}. \eqref{omega_dynamic} can be rewritten as 
\begin{align}
\label{omega_dynamic2}
 P_i^g- P_i^d-\sum\limits_{j\in N_i}P_{ij} =M_i\dot \omega_i+D_i\omega_i,\ i\in \mathcal {N}_{ac}
\end{align}
Thus, $ M_i\dot \omega_i+D_i\omega_i $ can be used to estimate $ P^g_{i,k}- P^d_i + \sum\limits_{j\in N_i}(z_{i,k}-z_{j,k}) $ in \eqref{SYN1} and its delayed form in \eqref{ASYN1}. 
By this control structure, the asynchronous algorithm \ref{algorithm1} is integrated to the real-time control in AC MGs.

\begin{figure}[t]
	\centering
	\includegraphics[width=0.3\textwidth]{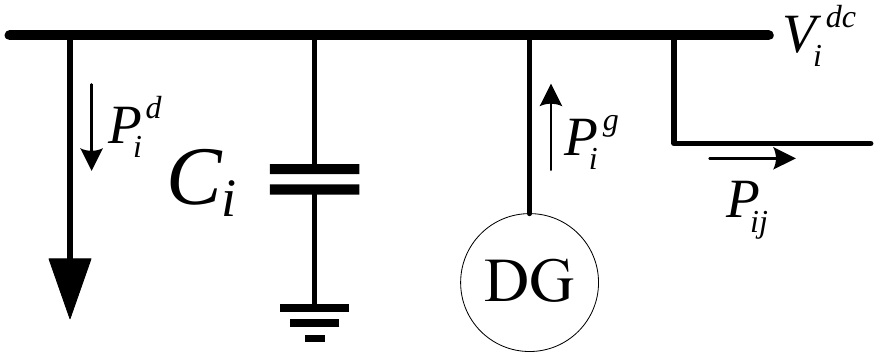}
	\caption{Simplified model of a DC MG}
	\label{DC_bus}
\end{figure}
\subsubsection{  DC MGs}
In DC MGs, DC capacitors are used to maintain the voltage stability of  DC buses \cite{wang2016distributed}. Then, the model of DC MGs can be simplified (see Fig.\ref{DC_bus}).
The power balance on DC bus $i$ is 
\begin{align}
	\label{V_dynamic}
	V_i^{dc} C_i\dot V_i^{dc} = P_i^g-P_i^d-\sum\limits_{j\in N_i}P_{ij},\ \ i\in \mathcal {N}_{dc}
\end{align}
where, $C_i$ is the capacitor connected to the DC bus; $V_i^{dc}$ is the voltage of the DC bus. 
Thus, $ V_i^{dc} C_i\dot V_i^{dc} $ can be used to estimate the $ P^g_{i,k}- P^d_i + \sum\limits_{j\in N_i}(z_{i,k}-z_{j,k}) $ in the DC MG. In this situation, we only need to measure the voltage, which is much easier to implement. Then, the asynchronous algorithm ASDPD is integrated to the real-time control in DC MGs. 

By taking the power system as a solver, the distributed asynchronous algorithm takes the following form
\begin{subequations}
	\label{RASYN}
	\begin{align}
	\label{RASYN1}
	&\tilde \mu_{i,k_i}= \mu_{i,k_i-\tau_{i}^{k_i}}+\sigma_{\mu}\bigg( - \sum\limits_{j\in N_i}\big(\mu_{i,k_i-\tau_{i}^{k_i}}-\mu_{j,k_j-\tau_{j}^{k_j}}\big) \nonumber \\
	&\qquad\qquad\qquad\qquad+ M_i\dot \omega_i+D_i\omega_i\big),\ i\in \mathcal {N}_{ac}
	\\
	\label{RASYN2}
	&\tilde \mu_{i,k_i}= \mu_{i,k_i-\tau_{i}^{k_i}}+\sigma_{\mu}\bigg( - \sum\limits_{j\in N_i}\big(\mu_{i,k_i-\tau_{i}^{k_i}}-\mu_{j,k_j-\tau_{j}^{k_j}}\big) \nonumber \\
	&\qquad\qquad\qquad\qquad+ V_i^{dc} C_i\dot V_i^{dc}\big),\quad i\in \mathcal {N}_{dc}
	\\
	\label{RASYN3}
	\begin{split}
	&\tilde P^g_{i,k_i}=\mathcal{P}_{\Omega_i}\left(P^g_{i,k_i-\tau_{i}^{k_i}} - \sigma_{g}\big(f_i^{'}(P^g_{i,k_i-\tau_{i}^{k_i}})+2\tilde \mu_{i,k_i} \right. \\ &\left.\qquad\qquad\qquad\qquad\qquad\qquad\quad \qquad-\mu_{i,k_i-\tau_{i}^{k_i}}\big)\right) \end{split}		\\
	\label{RASYN4}
	& \mu_{i,k_i+1}= \mu_{i,k_i-\tau_{i}^{k_i}} + \eta_k \big(\tilde \mu_{i,k_i}-\mu_{i,k_i-\tau_{i}^{k_i}}\big)\\
	\label{RASYN6}
	& P^g_{i,k_i+1}= P^g_{i,k_i-\tau_{i}^{k_i}}+ \eta_k \big(\tilde P^g_{i,k_i}-P^g_{i,k_i-\tau_{i}^{k_i}}\big)
	\end{align}
\end{subequations}
In the algorithm \eqref{RASYN}, only $ \mu $ needs to be transmitted between neighbors. Moreover, the variables $\tilde z$, $z$ are not necessary, which simplifies the controller greatly. Based on \eqref{RASYN}, we have the following \underline{r}eal-\underline{t}ime \underline{as}ynchronous \underline{d}istributed algorithm for \underline{p}ower \underline{d}ispatch (RTASDPD)
\begin{algorithm}
	\caption{\textit{RTASDPD}}
	\label{algorithmRT}
	\textbf{Input:} For MG $i$, the input is $\mu_{i,0} \in\mathbb{R}^n$, $P_{i,0}^g\in \Omega_i$.
	
	
	\textbf{Iteration at} \textit{$k_i$}: Suppose MG $i$'s clock ticks at time $k_i$, then MG $i$ is activated and updates its local variables as follows: 
	
	$\ \ $\textbf{Step 1: }\textbf{Reading phase}
	
	$\quad$Get $\mu_{j,k_j-\tau_{j}^{k_j}}$ from its neighbors' output cache. For an AC MG $ i $, measure the frequency $ \omega_i $. For a DC MG $ i $, measure the voltage $ V_i $. 
	
	$\ \ $\textbf{Step 2:} \textbf{Computing phase}
	
	$\quad$For $ i\in \mathcal {N}_{ac} $, calculate $\tilde \mu_{i,k_i}$ and $\tilde P^g_{i,k_i}$ according to \eqref{ASYN1} and \eqref{ASYN3} respectively. For $ i\in \mathcal {N}_{dc} $, calculate $\tilde \mu_{i,k_i}$ and $\tilde P^g_{i,k_i}$ according to \eqref{ASYN2} and \eqref{ASYN3} respectively.
	
	$\quad$Update $ \mu_{i,k_i+1}$ and $ P^g_{i,k_i+1}$ according to \eqref{ASYN4} and \eqref{ASYN5} respectively.
	
	$\ \ $\textbf{Step 3:} \textbf{Writing phase}
	
	$\quad$Write $\mu_{i,k_i+1}$ to its output cache and  $\mu_{i,k_i+1}$, $P^g_{i,k_i+1}$ to its local storage. Increase $k_i$ to $k_i+1$.
\end{algorithm}

\begin{remark}
	By taking the power system as a solver, there are three main advantages:
	\begin{itemize}
		\item Only the frequency in AC MG and voltage in DC MG need to be measured, which avoid the measurement of load demand $P_i^d$. As we know, the load demand is usually difficult to measure while the measurement of frequency and voltage is much easier. 
		\item We simplify the communication graph, where only the neighboring communication is needed. Moreover, we also simplify the controller structure. The auxiliary variables $\tilde z$ and $z$ are eliminated, making the controller easier to implement.
		\item In the problem \eqref{eq:opt.general1}, the power loss is not considered. In the real-time implementation, we measure the frequency and intend to drive it to the nominal value. In this way, the impact of power loss such as line and inverter loss can  be considered. 
	\end{itemize}
\end{remark}
\subsection{Control diagram}
\begin{figure}[t]
	\centering
	\includegraphics[width=0.49\textwidth]{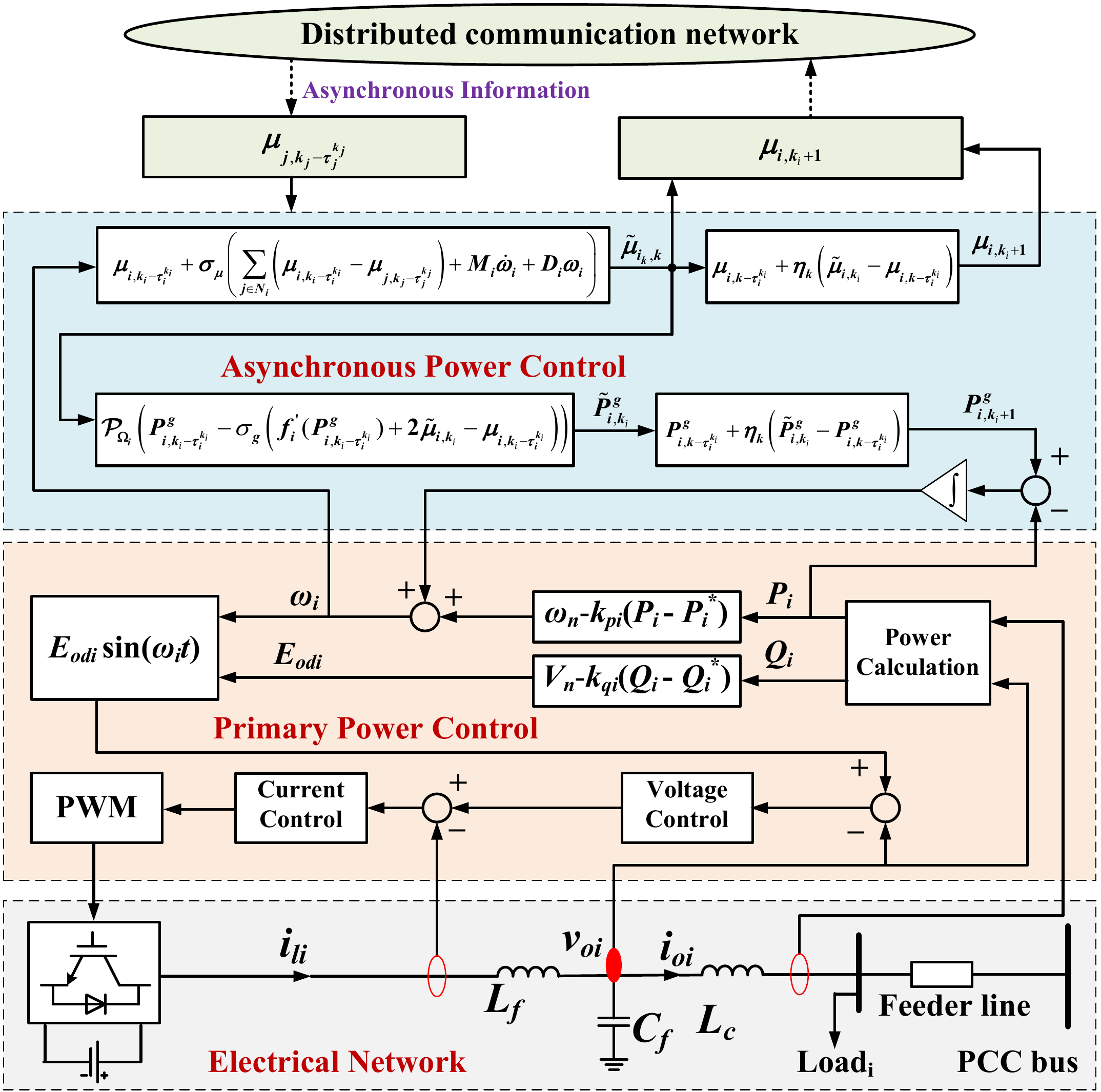}
	\caption{Control diagram of the proposed method}
	\label{Control_implementation}
\end{figure}
The control diagram is shown in Fig.\ref{Control_implementation}, which is composed of four levels: the electric network, the primary power control, the asynchronous power control and the distributed communication. In the electric network level, the current and voltage are measured as the input of the primary power control level. The primary power control level includes three loops, i.e., the current loop, the voltage loop and the power loop. In the power loop, droop control is utilized for both active power and reactive power control, where the active power and frequency are sent to the asynchronous power control level. Algorithm \ref{algorithm1} is integrated in the asynchronous control level, where the asynchronous information from neighboring  MGs is utilized. The output $P^g_{i,k+1}$ is the reference of the active power control. The error between $P^g_{i,k+1}$ and the measured active power is fed to the primary power control  via an integral operator. Other outputs $\mu_{i,k+1}$ is written to its output cache, which are sent to its neighbors via the  communication network. 

The  control diagram of DC MGs is similar to that in Fig.\ref{Control_implementation}, where the main difference is that the DC bus voltage $ V_i^{dc} $ is measured. The details are omitted here. 
\subsection{Optimality of the implementation}

By the implementation method, we claim that the equilibrium point of \eqref{RASYN} is also optimal with respect to the optimization problem \eqref{eq:opt.general1}. 
In steady state, we have $\omega_i=\omega_j=\omega^*, \forall i, j \in \mathcal{N}$, and $\frac{d V_i^{dc}}{d t}=0, \forall i\in \mathcal{N}_{dc}$. By $\eqref{ASYN1}-\eqref{ASYN4}$ and Definition \ref{def:ep.1}, we have 
	\begin{subequations}
		\begin{align}
		\label{ASYNeq1}
		\begin{split}
		&0=  \sum\limits_{j\in N_i}\left(\mu_{i}^*-\mu_{j}^*\right) + D_i\omega^*, \quad i\in\mathcal{N}_{ac}
		\end{split} \\
		\label{ASYNeq2}
		\begin{split}
		&0=  \sum\limits_{j\in N_i}\left(\mu_{i}^*-\mu_{j}^*\right) , \quad i\in\mathcal{N}_{dc}
		\end{split} \\
		\label{ASYNeq3}
		\begin{split}
		&P^{g*}_{i}=\mathcal{P}_{\Omega_i}\left(P^{g*}_{i} - \sigma_{g}\big(f_i^{'}(P^{g*}_{i})+\mu_{i}^*\big)\right) \end{split}	
		\end{align}
	\end{subequations}
	From \eqref{ASYNeq1} and \eqref{ASYNeq2}, we have 
	\begin{subequations}
		\label{equilibrium}
		\begin{align}
		r_1\mu^*+D_1\omega^*&=0\\
		&\vdots\nonumber\\
		r_{|\mathcal{N}_{ac}|}\mu^*+D_{|\mathcal{N}_{ac}|}\omega^*&=0\\
		r_{|\mathcal{N}_{ac}|+1}\mu^*&=0\\
		&\vdots\nonumber\\
		r_{|\mathcal{N}|}\mu^*&=0
		\end{align}
		where $r_i$ is the $i$th row of Laplacian matrix $L$, and $r_1+r_2+\cdots+r_{|\mathcal{N}|}=0$. Thus, we have
		\begin{align}
		\omega^*\sum\limits_{i\in\mathcal{N}_{ac}}D_i=0
		\end{align}
		This implies that $\omega^*=0$.
		Then, we have $\mu_{i}^*=\mu_{j}^*=\mu^*$ with a constant $\mu^*$. Other analysis is similar to that of Theorem \ref{optimality}, which is omitted here.
	\end{subequations}

\section{Case studies}
\begin{figure}[t]
	\centering
	\includegraphics[width=0.45\textwidth]{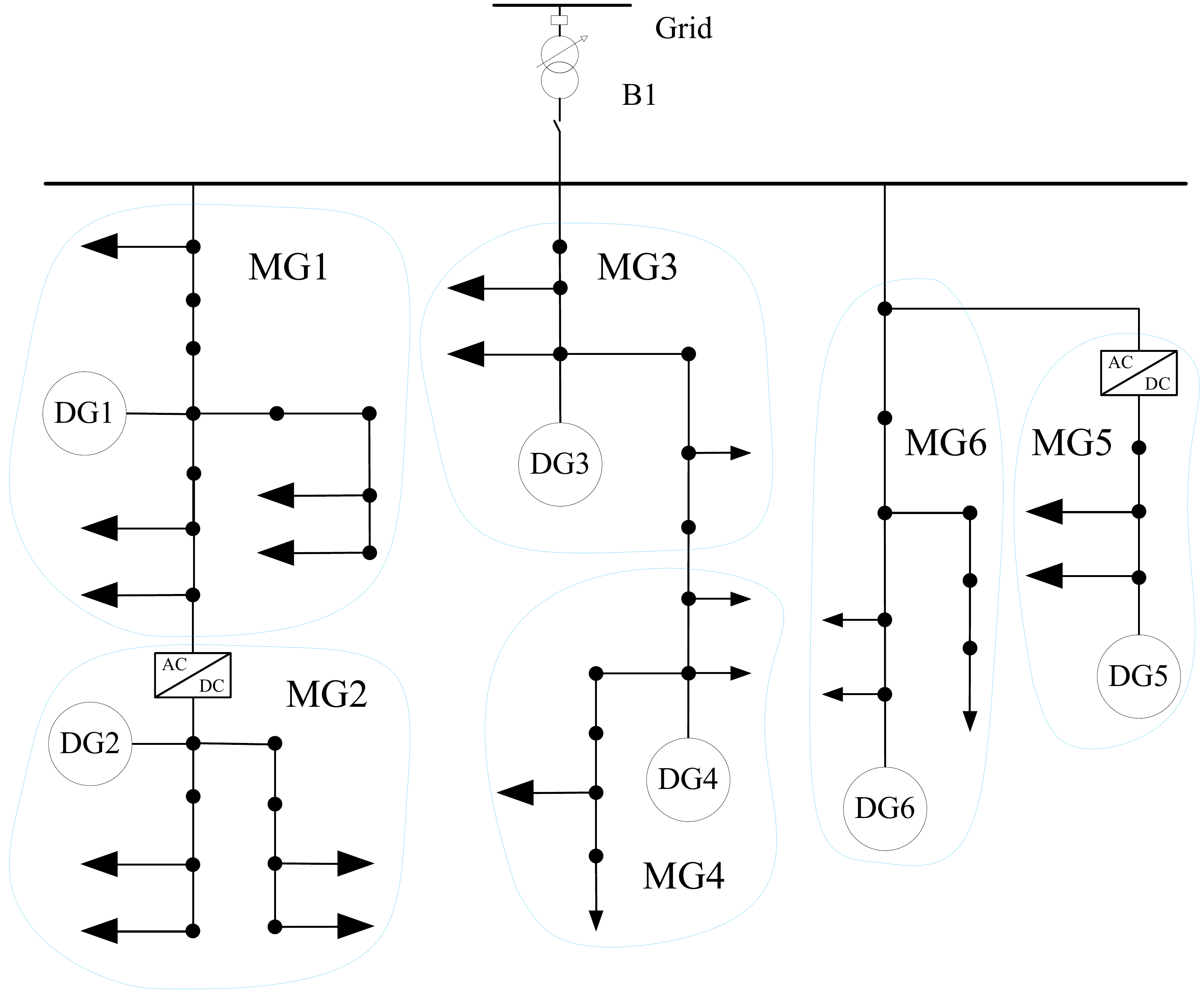}
	\caption{A schematic diagram of a typical 43-bus MG system}
	\label{fig:system}
\end{figure}
\subsection{System Configuration}
To verify the performance of the proposed method, a 44-bus system  shown in Fig.\ref{fig:system} is used for the test, which is a modified benchmark of low-voltage MG systems \cite{papathanassiou2005benchmark,wu2017distributed}. The system includes three feeders with six dispatchable MGs, where MG2 and MG5 are DC MGs while the others are AC MGs. The Breaker 1 is open, which implies that the system operates in an islanded mode. All simulations are implemented in the professional power system simulation software PSCAD.

The simulation scenario is: 1) at $t=2$s, there is a $60$kW load increase in the system; 2) at $t=8$s, there is a $30$kW load drop. Then, each MG increases its generation to balance the power and restore system frequency. Their initial generations are $(58.93, 46.94, 66.43, 59.95, 52.06, 55.09)$ kW. The communication graph is undirected, which is shown in Fig.\ref{fig:communication}. Other parameters are given in Table \ref{SysPara}.

\begin{table}[http]
	\footnotesize
	\centering
	\caption{System parameters}
	\label{SysPara}
	\begin{tabular}{c c c c c c c}
		\hline 
		DG $i$ 				& 1 	& 2 	& 3    & 4 	  & 5 	 &6\\
		\hline
		$a_i$ 				 	 & 0.8   & 1  	& 0.65   & 0.75  & 0.9  & 0.85 \\
		$b_i$ 					 & 0.01  & 0.01 & 0.014  & 0.012 & 0.01 & 0.01 \\
		$\overline P_i^g$ (kW)  & 85 	 & 80 	& 90     & 85    & 80   & 80\\
		$\underline P_i^g$ (kW)   & 0     & 0    & 0  	 & 0  	 & 0    & 0  \\
		\hline
	\end{tabular}
\end{table}

\begin{figure}[t]
	\centering
	\includegraphics[width=0.28\textwidth]{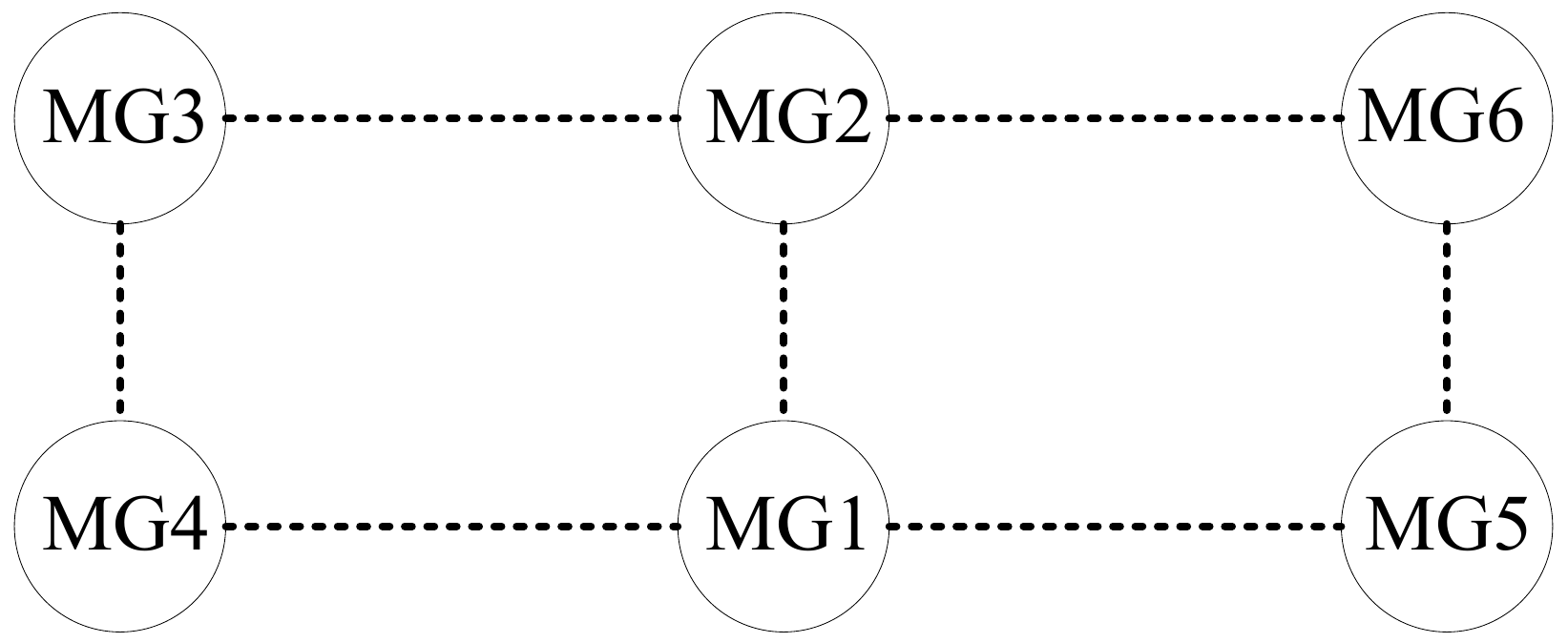}
	\caption{Communication graph of the system}
	\label{fig:communication}
\end{figure}

\subsection{Non-identical sampling rates}
Individual MGs may have different sampling rates (or control period) in practice, which could cause asynchrony and compromise the control performance. In this part, we consider the impact of non-identical sampling rates. The sampling rates of MG1-MG6 are set as 10,000Hz, 12,000Hz, 14,000Hz, 16,000Hz, 18,000Hz, 20,000Hz, respectively. The dynamics of the frequencies and voltages of MGs are shown in Fig.\ref{Figure_sampling_freq}.

\begin{figure}[t]
	\centering
	\includegraphics[width=0.49\textwidth]{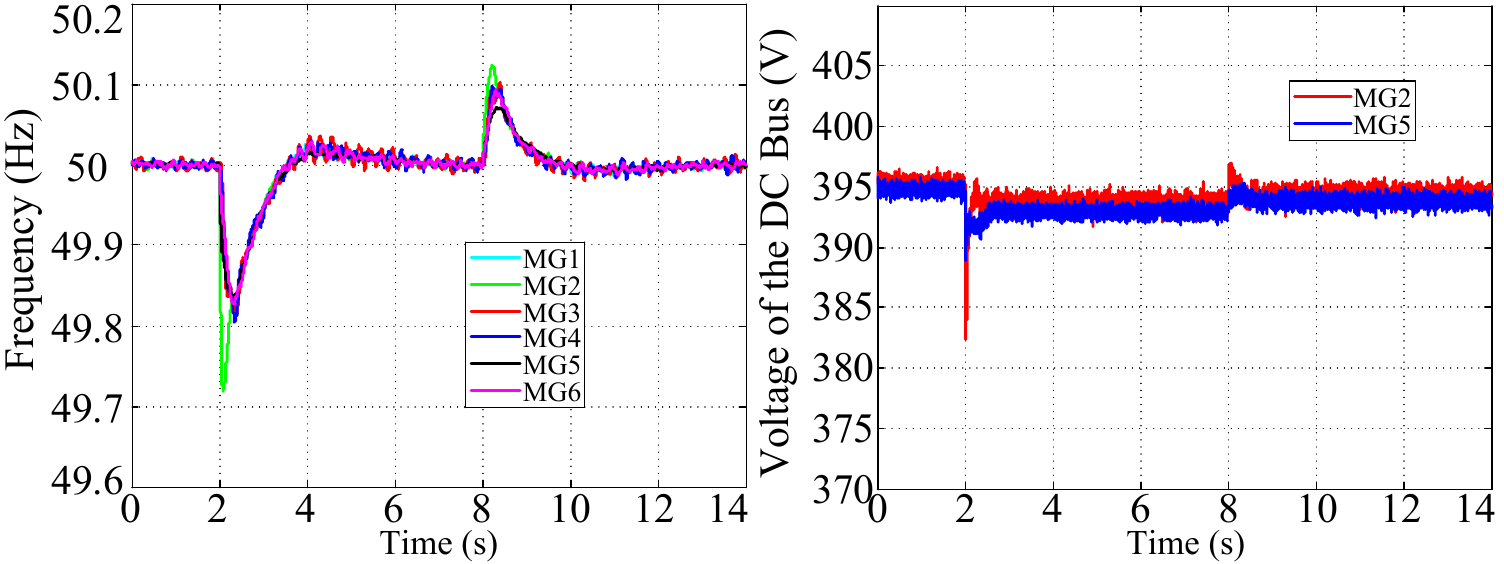}
	\caption{Dynamics of frequencies (left) and voltages (right). For a DC MG, its frequency implies the frequency of the corresponding DC/AC inverter.}
	\label{Figure_sampling_freq}
\end{figure}

As the load change is located in MG2, the frequency nadir of MG2 is the lowest (about 0.26Hz). The system frequency recovers in 4 seconds after the load change. When the load decreases, the frequency experiences an overshoot of 0.1Hz, and recovers in 2 seconds.  Voltages on the DC buses of MG2 and MG5 have a small drop when load increases. On the contrary, they voltages slightly increase after the load drop. The result demonstrates that the system  is fairly stable to load variation even with non-identical sampling rates.
\begin{figure}[t]
	\centering
	\includegraphics[width=0.49\textwidth]{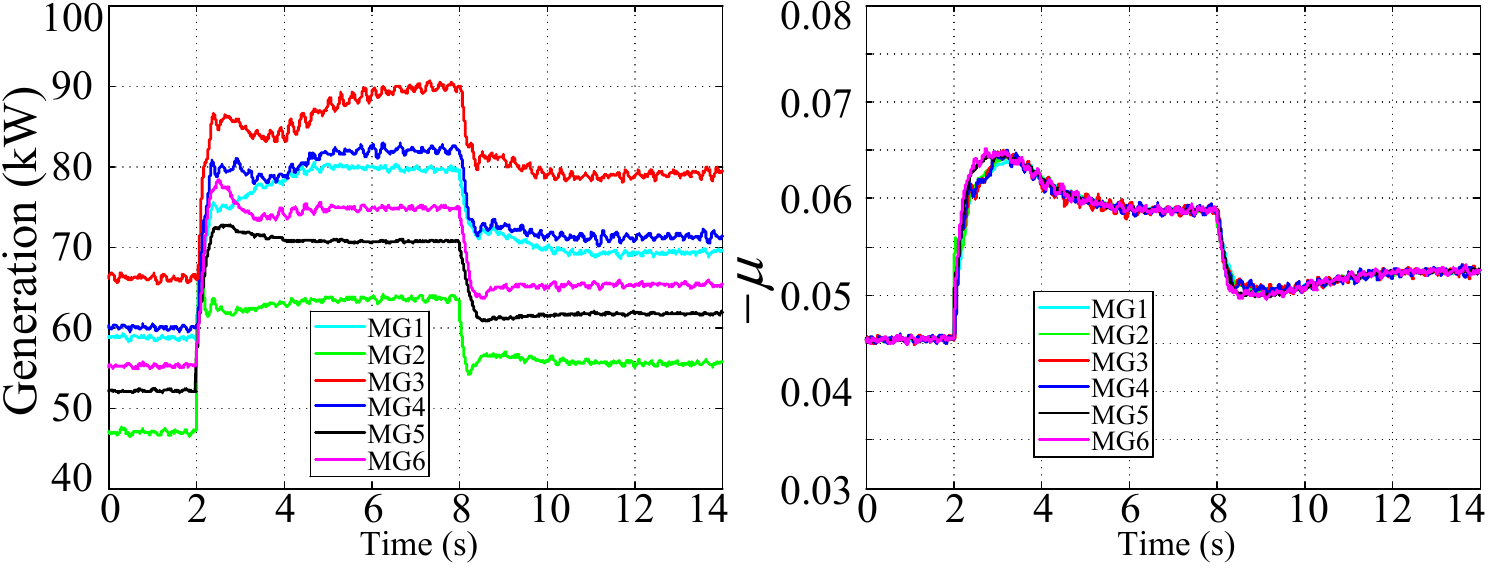}
	\caption{Dynamics of generations (left) and $-\mu$ (right).}
	\label{Figure_sampling_gen}
\end{figure}

Dynamics of generations and $-\mu$ are given in Fig.\ref{Figure_sampling_gen}. At the end of stage one (from 2s to 8s), generations of MGs are (79.32, 63.60, 90, 81.82, 70.47, 75.08)kW respectively. At the end of stage two (from 8s to 14s), their values are (69.50, 55.46, 79.20, 70.97, 61.86, 65.04)kW respectively. Generations are identical with that obtained by solving the centralized optimization problem (implemented by CVX). This result verifies the optimality of the proposed method. 
$-\mu_i$ stands for the marginal cost of MG $i$, whose dynamic is given on the right part of Fig.\ref{Figure_sampling_gen}. The marginal cost of different MGs converges to the same value when the system is stabilized, which indicates that the system operates in an optimal state. 

\subsection{Random time delays}
In practice, time delay always exists in the communication, which is usually varying up to channel situations. This implies that the time delay is random and cannot be known in advance. In this part, we examine the impact of time-varying time delays. Initially, all the time delays in communication are set as 20ms. And then we intentionally increase the time delays on the channels of MG1-MG2 and MG5-MG6. Additionally, we have the time delays on these two channels varying in ranges [100ms, 200ms], [200ms, 500ms], [500ms, 800ms] and [800ms, 1000ms], respectively, while the delays on other channels remain 20ms. Frequency and generation dynamics of MG1 under different scenarios are shown in Fig.\ref{Figure_sampling_delays}. It is observed that, When time delays increase, the convergence becomes slower with larger overshoots in frequency. However, the steady-state generations are still exactly identical to the optimal solution, which verifies the effectiveness of our controller under varying time delays. 

\begin{figure}[t]
	\centering
	\includegraphics[width=0.48\textwidth]{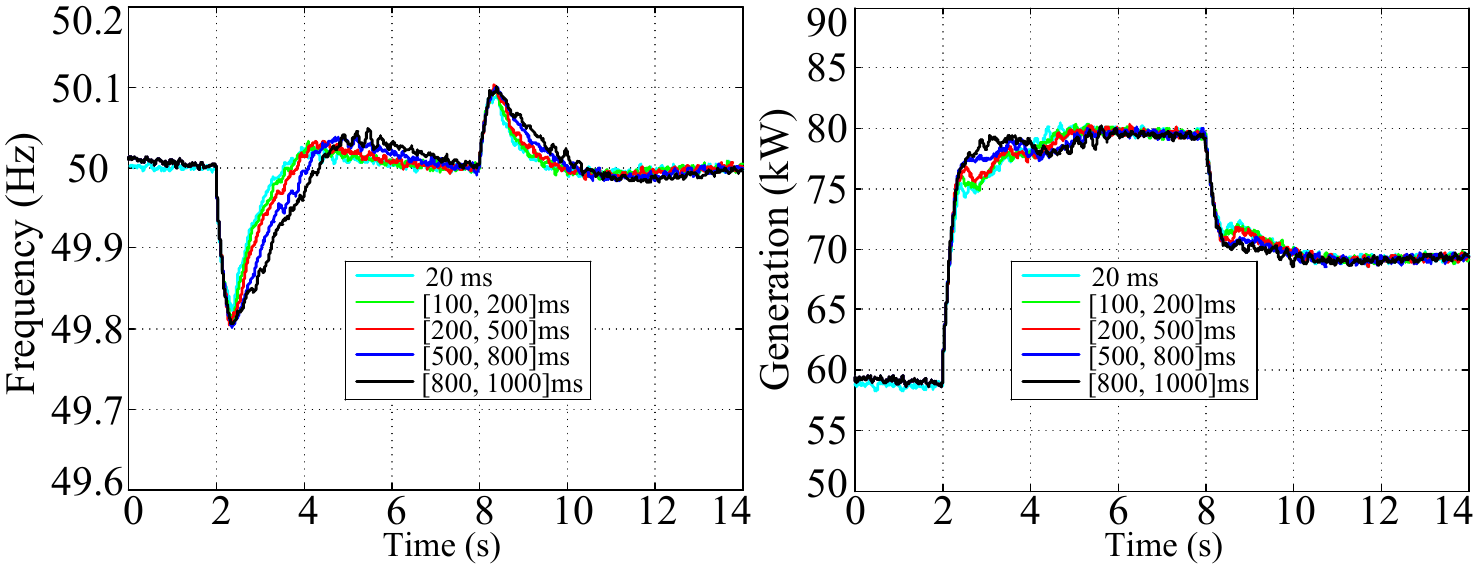}
	\caption{Frequencies and generations under different/varying time delays. }
	\label{Figure_sampling_delays}
\end{figure}

\subsection{Comparison with synchronous algorithm}

In this part, we compare the performances of the asynchronous and synchronous algorithms under imperfect communication. In the asynchronous case, the sampling rates of MGs are set to the same as that in Section VII.B and the time delay varies between $[500, 800]$ms. The  dynamics of MG1 with two algorithms are shown in Fig.\ref{Figure_samplings_async_sync}. With the synchronous algorithm SDPD, the system remains stable after load perturbations. However, the frequency nadir and overshoot deteriorate, and the convergence becomes slower. The generation  takes more time to reach the optimal solution, with an obvious fluctuation. The reason is that MGs have to wait for the slowest one in the synchronous case. This result confirms the advantage of  the asynchronous algorithm. 
\begin{figure}[t]
	\centering
	\includegraphics[width=0.48\textwidth]{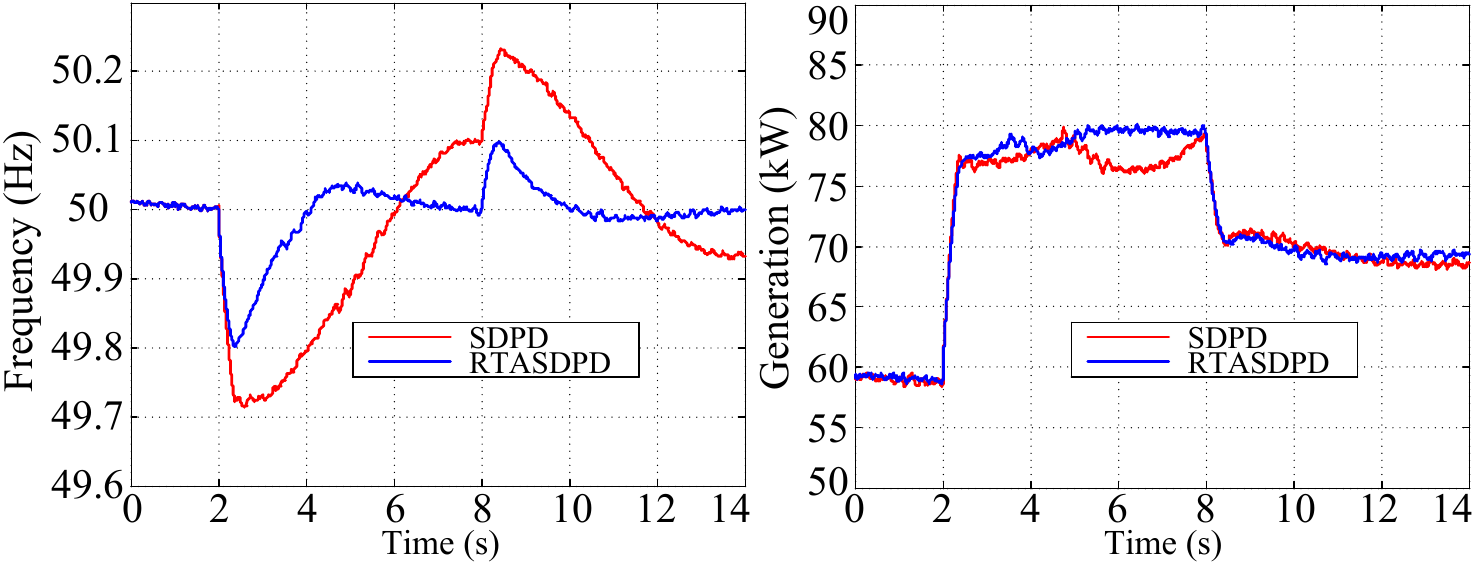}
	\caption{ Dynamics of frequencies and generations under synchronous and asynchronous cases.  }
	\label{Figure_samplings_async_sync}
\end{figure}


\subsection{Plug-n-play test}
In this part, we examine the  performance of RTASDPD under the plug-n-play operation mode. The simulation scenario is that DG4 is switched off at $t=2$s and switched on at $t=8$s. Dynamics of frequencies and generations are illustrated in Fig.\ref{Figure_pnp}. When DG4 is switched off, there is a small frequency oscillation. Since MG3 is close to DG4, its frequency nadir is the lowest. When DG4 is connected, the frequency oscillation is more fierce. However, the system is stabilized rapidly in 2s. Generation of DG4 drops to zero when it is switched off. Then other DGs increase their generations to re-balance the power. After DG4 is re-connected, all the generations recover to the initial values. This demonstrate that  our controller can adapt to the plug-n-play mode. 

\begin{figure}[t]
	\centering
	\includegraphics[width=0.48\textwidth]{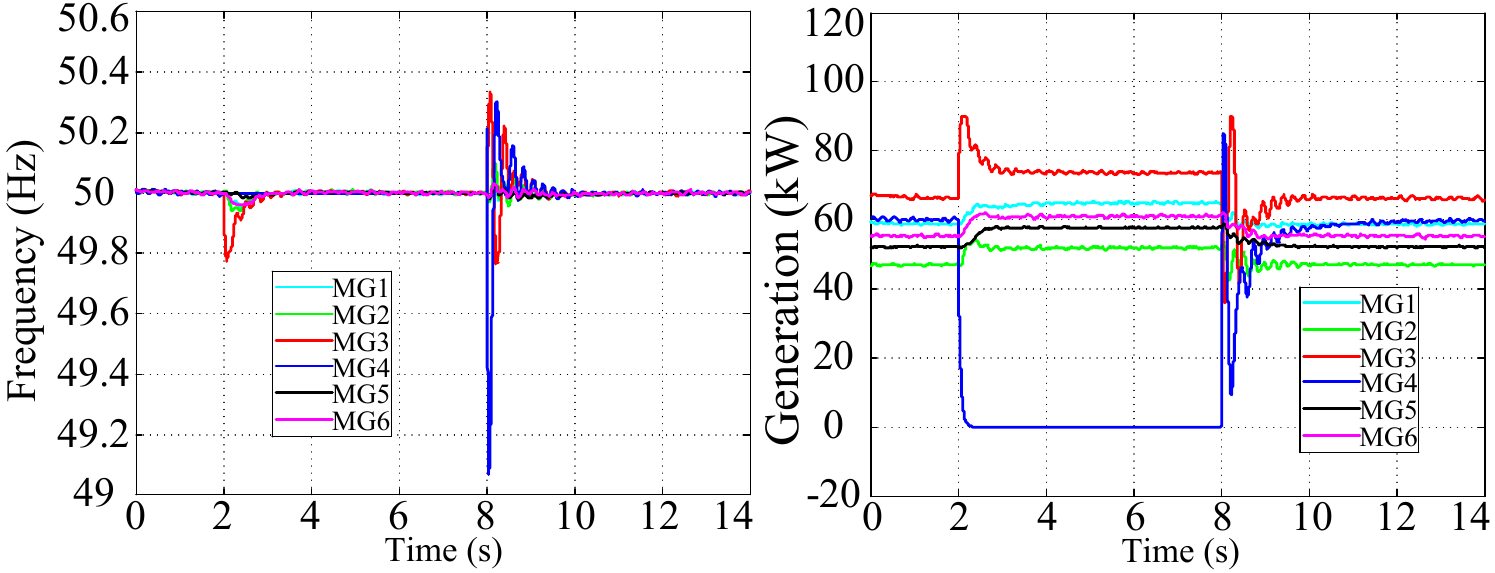}
	\caption{Dynamics of frequencies (left) and generations (right) when MG4 is switched off and on.}
	\label{Figure_pnp}
\end{figure}

\section{Conclusion}

In this paper, we have addressed the information asynchrony issue in the distributed optimal power control of hybrid MGs. By introducing a random clock, different kinds of asynchrony due to imperfect communication are fitted into a unified framework. Based on this, we have devised an asynchronous algorithm  with a proof of convergence. We have also provided an upper bound of the time delay. Furthermore, we have presented the real-time implementation of the asynchronous distributed power control in hybrid AC and DC MGs. In the implementation, the power system is taken as a solver, which simplifies the controller and can consider the power loss. Numerical experiments on PSCAD confirm the superior performance of the proposed methodology. 

Communication asynchrony widely exists in MGs. This paper gives a framework to design distributed controller under imperfect communication. The proposed methodology can also be extended to other related problems, such as voltage control in power systems and energy control in multi-energy systems. 

\bibliographystyle{IEEEtran}
\bibliography{mybib}

\ifCLASSOPTIONcaptionsoff
  \newpage
\fi

\end{document}